\documentclass[a4paper, 10pt]{article}
\usepackage[utf8]{inputenc}

\title{Generalized column distances of convolutional codes}
\author{Elisa Gorla and Flavio Salizzoni \\
{\small {\it Dedicated to our friend and mentor Joachim Rosenthal on the occasion of his sixtieth birthday.}}}

\date{}

\usepackage{geometry}
\geometry{a4paper, top=3cm, bottom=3cm, left=3cm, right=3cm}

\usepackage{amssymb,amsmath,amsthm}
\usepackage{graphicx,url,comment,xcolor}

\theoremstyle{definition}
\newtheorem{theorem}{Theorem}[section]
\newtheorem{proposition}[theorem]{Proposition}
\newtheorem{lemma}[theorem]{Lemma}
\newtheorem{definition}[theorem]{Definition}
\newtheorem{example}[theorem]{Example}
\newtheorem{remark}[theorem]{Remark}
\newtheorem{corollary}[theorem]{Corollary}

\newcommand{\N}{\mathbb N}

\newcommand{\Z}{\mathbb Z}

\newcommand{\F}{\mathbb F}

\newcommand{\Cc}{\mathcal C}
\newcommand{\Dd}{\mathcal D}
\newcommand{\Ll}{\mathcal L}
\newcommand{\rk}{\mathrm{rk}}
\newcommand{\wt}{\mathrm{wt}}

\newcommand{\dd}{\mathrm{d}}
\newcommand{\djr}{\mathrm{d}_j^r}

\newcommand{\supp}{\mathrm{supp}}

\begin{document}

\maketitle
\begin{abstract}
We define a notion of $r$-generalized column distances for the $j$-truncation of a convolutional code. Taking the limit as $j$ tends to infinity allows us to define $r$-generalized column distances of a convolutional code. We establish some properties of these invariants and compare them with other invariants of convolutional codes which appear in the literature.
\end{abstract}

\section{Introduction}

Convolutional codes play an important practical role, as they are used extensively to achieve reliable data transmission in digital video, mobile communications, satellite communications, and other applications. Their popularity comes mostly from the fact that maximum-likelihood soft-decision decoding can be performed very efficiently on convolutional codes. In spite of the fact that they play a central role in the applications, however, the mathematical theory of convolutional codes is not as well-developed as for other families of codes.

In \cite{GS} we proposed and studied a new definition of generalized weights for a convolutional code and defined optimal convolutional anticodes. In this paper, we continue our investigation of the mathematical structure of convolutional codes by introducing another family of invariants: the $r$-generalized column distances. A notion of $r$-generalized column distance for the $j$-truncation of a convolutional code was given by Cardell, Firer, and Napp in \cite{CFN17}, for the special case of noncatastrophic convolutional codes. Later in \cite{CFN19} the same authors modified their definition and introduced unrestricted generalized column distances of noncatastrophic convolutional codes, which they further studied in \cite{CFN20}. 

In this work, we extend the original definition from \cite{CFN17} to any convolutional code. We call this invariant the $(r,j)$-generalized column distance of the code. Then, by taking the limit of the $(r,j)$-generalized column distance of a code as $j$ tends to infinity, we define its $r$-generalized column distance. This produces new invariants of convolutional codes, whose basic properties we study in this paper. We also introduce $j$-equivalences and equivalences of convolutional codes and we show that $(r,j)$-generalized column distances are invariant under $j$-equivalence and $r$-generalized column distances are invariant under equivalence. Using these notions, we are also able to show that $r$-generalized column distances are invariant under isometry. Notice that $(r,j)$-generalized column distances are not invariant under isometry, not even in the special case of noncatastrophic convolutional codes, as was already observed in \cite{CFN17}. 
In addition, we investigate the relations between generalized column distances and other invariants of convolutional codes, including the unrestricted generalized column distances defined in \cite{CFN19} and the generalized weights defined in \cite{GS}.

The paper is organized as follows. In Section 2 we introduce $j$-equivalences and equivalences and we study their basic properties, including their relation with isometries and strong isometries. In Section 3 we define $(r,j)$-generalized column distances and $r$-generalized column distances. We then establish some crucial properties of these invariants, e.g., we show that they are invariant under $j$-equivalences, equivalences, and isometries (see Proposition \ref{prop:invarjeq}, Corollary \ref{corollary:equivinv}, and Theorem \ref{theorem:isominv} for the precise statements). In Section 4 we discuss the relation between generalized column distances and unrestricted generalized column distances and generalized weights.

Throughout the paper, $\Cc\subseteq\F_q[x]^n$ denotes a convolutional code, i.e., an $\F_q[x]$-submodule of $\F_q[x]^n$. We refer to \cite{GS} for a discussion of the choice of working in this level of generality. We denote by $k$ the rank of $\Cc$ and always assume that $1\leq k\leq n$. Further, we denote by $\delta$ its internal degree and by $\delta_1$ the memory of $\Cc$.

\section{Equivalences and $j$-equivalences}

In this section we define $j$-equivalences and equivalences of convolutional codes. We study their main properties and their relation with isometries of convolutional codes. We start by recalling the definition of isometry and of strong isometry.

\begin{definition}
An $\F_q[x]$-isomorphism of convolutional codes $\phi:\Cc_1\rightarrow\Cc_2$ is an \textbf{isometry} if $\wt(c)=\wt(\phi(c))$ for all $c\in\Cc_1$. If in addition $\deg(c)=\deg(\phi(c))$ for all $c\in\Cc_1$, then $\phi$ is a \textbf{strong isometry}.
\end{definition}

Isometries of convolutional codes have been classified by Gluesing-Luerssen. 

\begin{theorem}[{\cite[Theorem 4.1]{Glu}}]\label{theorem:isometric}
Let $\phi:\Cc_1\rightarrow\Cc_2\subseteq\F_q[x]^n$ be an isometry of convolutional codes. There exist a permutation matrix $P\in\mathrm{GL}_n(\F_q)$ and a diagonal matrix
$D = \mathrm{diag}(a_1x^{m_1},\dots,a_nx^{m_n})$ where
$a_1,\dots,a_n\in\F_q^{*}$ and $m_1,\dots,m_n\in\Z$ such that $\phi(c)=cPD$ for all $c\in\Cc$.
\end{theorem}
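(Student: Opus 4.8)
The plan is to prove this as a MacWilliams-type extension and classification theorem for the coordinate-additive weight on $\F_q[x]^n$. Write $\wt(p_1,\dots,p_n)=\sum_{i=1}^n w(p_i)$, where $w(p)$ is the number of nonzero coefficients of $p\in\F_q[x]$. The operations invisible to $w$ are multiplication by a nonzero scalar and by a power of $x$, since $w(ax^mp)=w(p)$ for every $a\in\F_q^{*}$ and every $m\geq 0$, and this is what will produce $P$ and $D$ at the end.

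First I would pass from $\phi$ to generator matrices. Fix an encoder $G\in\F_q[x]^{k\times n}$ of $\Cc_1$, i.e.\ a full row rank matrix whose rows generate $\Cc_1$, and let $G'\in\F_q[x]^{k\times n}$ be obtained by applying $\phi$ to each row of $G$. Since $\phi$ is an $\F_q[x]$-isomorphism, the rows of $G'$ generate $\Cc_2$, and $\F_q[x]$-linearity gives $\phi(uG)=uG'$ for all $u\in\F_q[x]^k$. So the isometry hypothesis becomes $\wt(uG)=\wt(uG')$ for every $u$, and the conclusion $\phi(c)=cPD$ is equivalent to $G'=GPD$; indeed, once $G'=GPD$ is known then $\phi(c)=\phi(uG)=uGPD=cPD$ for every $c=uG\in\Cc_1$. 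Writing $g_1,\dots,g_n$ and $g_1',\dots,g_n'$ for the columns of $G$ and of $G'$ (vectors in $\F_q[x]^k$), the identity $\wt(uG)=\wt(uG')$ reads $\sum_{i=1}^n w(u\cdot g_i)=\sum_{i=1}^n w(u\cdot g_i')$ for all $u$, with $u\cdot g_i=\sum_{\ell=1}^k u_\ell (g_i)_\ell$ the $i$-th coordinate of $uG$; and $G'=GPD$ says exactly that the multiset $\{g_1',\dots,g_n'\}$ is obtained from $\{g_1,\dots,g_n\}$ by a permutation followed by rescaling the $i$-th column by some $a_ix^{m_i}$ with $a_i\in\F_q^{*}$ and $m_i\in\Z$ (negative exponents must be allowed because a column of $G$ may be divisible by a power of $x$).

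The heart of the argument is then the combinatorial claim: if $\sum_i w(u\cdot g_i)=\sum_i w(u\cdot g_i')$ for all $u\in\F_q[x]^k$, then the multisets $\{g_1,\dots,g_n\}$ and $\{g_1',\dots,g_n'\}$ agree up to rescaling entries by factors $ax^m$, $a\in\F_q^{*}$, $m\in\Z$. To prove it I would first normalize, replacing each $g_i$ by $x^{-c_i}g_i$ where $x^{c_i}$ is the largest power of $x$ dividing all entries of $g_i$, and similarly for the $g_i'$; this changes no $w(u\cdot g_i)$, reduces the target equivalence to honest $\F_q^{*}$-scaling, and the recorded exponents reassemble into the $m_i$. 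After normalization I would run the Möbius-inversion argument underlying the classical MacWilliams extension theorem: $u\mapsto w(u\cdot g)$ depends on the normalized column $g$ only through its $\F_q^{*}$-class, and from the sums $\sum_i w(u\cdot g_i)$, with $u$ ranging over a suitable finite family of test vectors, one recovers for each class the number of columns lying in it, which forces the two multisets of classes to coincide. I expect the cleanest way to handle the test vectors is an $x$-adic induction: reducing the normalized columns modulo $x$ lands one in the ordinary $\F_q$-linear MacWilliams extension theorem, and one then peels off the coefficients of $x,x^2,\dots$ one power at a time.

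The main obstacle I anticipate is exactly this combinatorial core, and within it the fact that $x$ is not a unit: one must show both that the weight data cannot separate a column from its $x$-multiples (immediate from $w(xp)=w(p)$) and that it separates everything else, with the normalization and the $x$-adic induction as the delicate point. Once the column multisets are matched, assembling the permutation matrix $P$ and the diagonal matrix $D=\mathrm{diag}(a_1x^{m_1},\dots,a_nx^{m_n})$ with $G'=GPD$ and concluding $\phi(c)=cPD$ is routine, as observed above.
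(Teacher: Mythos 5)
The paper does not actually prove this statement: it is imported verbatim from Gluesing-Luerssen \cite[Theorem 4.1]{Glu}, and the remark following Corollary~\ref{corollary:isometrytoequiv} explicitly records that the proof of Theorem~\ref{theorem:isometric} relies on the block-code MacWilliams Extension Theorem. So there is no internal argument to compare against, and your proposal has to stand on its own. Its first moves are sound: passing to a generator matrix $G$ and the image matrix $G'$, observing that the conclusion is equivalent to $G'=GPD$, i.e.\ to the column multisets of $G$ and $G'$ agreeing up to rescaling by units $ax^m$, and normalizing each column by the largest power of $x$ dividing it so that the exponents $m_i$ (including negative ones) are accounted for. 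That reduction is the standard and correct first step.

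The genuine gap is exactly the point you defer: the claim that the data $\bigl(\sum_i w(u\cdot g_i)\bigr)_{u\in\F_q[x]^k}$ determines the multiset of columns up to $\F_q^{*}x^{\Z}$-scaling \emph{is} the theorem, and it is asserted rather than proven. The proposed $x$-adic induction does not obviously get off the ground. For a constant test vector $u\in\F_q^k$, the quantity $w(u\cdot g_i)$ counts the nonzero coefficients of the entire polynomial $u\cdot g_i$, so the hypothesis $\sum_i w(u\cdot g_i)=\sum_i w(u\cdot g_i')$ does not isolate the contribution of the constant terms $(u\cdot g_i)(0)$; hence you cannot directly invoke the classical MacWilliams theorem for the block codes generated by $G(0)$ and $G'(0)$, because higher-order coefficients pollute every test value. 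Moreover, the Möbius-inversion or character-sum machinery of the classical proof averages over the \emph{finite} message space $\F_q^k$; over $\F_q[x]^k$ one must truncate in degree, and the weight of a truncation of $uG$ is governed by the sliding matrix $G_j^c$, whose block-code structure is not simply ``$G$ reduced modulo $x^{j+1}$'' — one also has to rule out that two inequivalent columns could induce the same weight function on every finite truncation. Bridging these difficulties (isolating the degree-$0$ layer and controlling the interaction between layers) is where the substance of Gluesing-Luerssen's proof lies, so as written the proposal is a plausible strategy outline with its central lemma missing, not a proof.
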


Given an element $c=(p_1(x),\dots,p_n(x))\in\F_q[x]^n$ where $p_{\ell}(x)=a_{\ell,0}+a_{\ell,1}x+a_{\ell,2}x^2+\dots$, we can express it as $c=\sum_{i=0}^{\infty} c[i]x^i$, where $c[i]=(a_{1,i},\dots,a_{n,i})\in\F_q^n$. We define $c_{[h,j]}$ as
$$c_{[h,j]}=\sum_{i=h}^{j} c[i]x^i.$$
Usually, $c_{[0,j]}$ is called the j-th \textbf{truncation} of $c$. For a vector space $V\subseteq\F_q[x]^n$, we denote by $V_{[0,j]}$ the vector space generated by the $j$-th truncation of all the elements in $V$. The space $V_{[0,0]}$ will be denoted also as $V[0]$, since it coincides with the evaluation of the vector space at $x=0$.

\begin{definition}
For each $j\in\N_{0}$, an $\F_q[x]$-isomorphism of convolutional codes $\phi:\Cc_1\rightarrow\Cc_2$ is called $j$-\textbf{equivalence} if $\phi:(\Cc_1)_{[0,j]}\rightarrow(\Cc_2)_{[0,j]}$ is a Hamming weight-equivalence, i.e., $\wt(c_{[0,j]})=\wt(\phi(c)_{[0,j]})$ for all $c\in\Cc_1$. We say that $\phi$ is an \textbf{equivalence} if it is a $j$-equivalence for all $j\in\N_0$.
\end{definition}

The next proposition follows easily from the definition. It collects some of the basic properties of $j$-equivalences.

\begin{proposition}\label{proposition:propertiesjequiv}
Let $j\in\N_{0}$ and let $\phi:\Cc_1\rightarrow\Cc_2$  be a $j$-equivalence. The following hold.
\begin{enumerate}
\item $\phi^{-1}:\Cc_2\rightarrow\Cc_1$ is a $j$-equivalence.
\item $\dim(\Cc_1)_{[h,i]}=\dim(\Cc_2)_{[h,i]}$ for any $0\leq h\leq i\leq j$. In particular, $\dim(\Cc_1[0])=\dim(\Cc_2[0])$.
\item Let $\Dd\subseteq\Cc$, then $\phi\restriction_{\Dd}$ is a $j$-equivalence. 
\item If $\psi:\Cc_2\rightarrow\Cc_3$ is a $j'$-equivalence, then $\psi\circ\phi$ is a $\min\{j,j'\}$-equivalence.
\item If $\psi:\Cc_3\rightarrow\Cc_1$ is a $j'$-equivalence, then $\phi\circ\psi$ is a $\min\{j,j'\}$-equivalence.
\item If $\psi:\Cc_3\rightarrow\Cc_4$ is a $j'$-equivalence, then $\psi\times\phi$ is a $\min\{j,j'\}$-equivalence.
\end{enumerate}
\end{proposition}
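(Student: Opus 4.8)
The plan is to record one auxiliary fact about $j$-equivalences and then dispatch the six items, of which only (2) needs more than a line. The auxiliary fact I would prove first is: a $j$-equivalence $\phi\colon\Cc_1\to\Cc_2$ is automatically an $i$-equivalence for every $0\le i\le j$, and more generally $\wt(c_{[h,i]})=\wt(\phi(c)_{[h,i]})$ for all $c\in\Cc_1$ and all $0\le h\le i\le j$. To see the first part, apply the defining identity of $\phi$ to the element $x^{j-i}c\in\Cc_1$: since $\phi$ is $\F_q[x]$-linear we have $\phi(x^{j-i}c)=x^{j-i}\phi(c)$, truncation to degree $j$ gives $(x^{j-i}c)_{[0,j]}=x^{j-i}c_{[0,i]}$ and $(x^{j-i}\phi(c))_{[0,j]}=x^{j-i}\phi(c)_{[0,i]}$ (this is where $i\le j$ is used), and multiplying by a power of $x$ does not change the weight, so $\wt(c_{[0,i]})=\wt(\phi(c)_{[0,i]})$. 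The general statement then follows by writing $c_{[0,i]}=c_{[0,h-1]}+c_{[h,i]}$ as a sum over disjoint degree ranges and using additivity of the weight.

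Items (1) and (3) are then immediate from the definition: $\phi^{-1}$ is again an $\F_q[x]$-isomorphism and $\wt(\phi^{-1}(d)_{[0,j]})=\wt(\phi(\phi^{-1}(d))_{[0,j]})=\wt(d_{[0,j]})$ for $d\in\Cc_2$; while for a subcode $\Dd$ of the domain $\Cc_1$ the image $\phi(\Dd)$ is a subcode of $\Cc_2$, the restriction $\phi\restriction_{\Dd}$ is an $\F_q[x]$-isomorphism onto it, and the weight identity on $\Dd$ is inherited from the one on $\Cc_1$. For items (4)--(6) I would set $m=\min\{j,j'\}$ and use the auxiliary fact to regard both $\phi$ and $\psi$ as $m$-equivalences; then the weight of an $m$-truncation is preserved by $\psi\circ\phi$ and by $\phi\circ\psi$ simply by chaining the two identities, and by $\psi\times\phi$ because $\wt$ is additive over the two blocks of coordinates and truncation commutes with concatenation, $(a,b)_{[0,m]}=(a_{[0,m]},b_{[0,m]})$.

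For item (2) the one thing to check is that $\phi$ descends to the truncated spaces. Fix $0\le h\le i\le j$ and observe that $(\Cc_1)_{[h,i]}$ is exactly the image of $\Cc_1$ under the $\F_q$-linear map $c\mapsto c_{[h,i]}$. By the auxiliary fact $c_{[h,i]}=0$ forces $\phi(c)_{[h,i]}=0$, so $c_{[h,i]}\mapsto\phi(c)_{[h,i]}$ is a well-defined $\F_q$-linear map $\bar\phi\colon(\Cc_1)_{[h,i]}\to(\Cc_2)_{[h,i]}$; it is surjective because $\phi$ is, and it preserves the weight, again by the auxiliary fact. A weight-preserving linear map has trivial kernel, since a nonzero vector has positive weight, hence $\bar\phi$ is an isomorphism of $\F_q$-vector spaces and $\dim(\Cc_1)_{[h,i]}=\dim(\Cc_2)_{[h,i]}$; the case $h=i=0$ gives the stated $\dim(\Cc_1[0])=\dim(\Cc_2[0])$. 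The only genuinely non-formal point in the whole proof is the support bookkeeping inside the auxiliary fact --- verifying that truncating $x^{j-i}c$ at degree $j$ returns precisely $x^{j-i}c_{[0,i]}$ --- so I expect that (together with remembering to reduce to a common index $m$ in items (4)--(6)) to be the only place one can slip.
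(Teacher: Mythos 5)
Your proof is correct, and your auxiliary fact is precisely the paper's Lemma~\ref{lemma:decreasingequivalence}, established there by the same shift-by-$x$ argument; the paper itself leaves the proposition unproved as ``following easily from the definition,'' and your verification of the six items (including the well-definedness and injectivity of the induced map in item~2) fills that in correctly.
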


In the next lemma, we collect a few more facts on $j$-equivalences.

\begin{lemma}\label{lemma:decreasingequivalence}
Let $\Cc_1$ and $\Cc_2$ be convolutional codes, let $j\in\N_{0}$, and let $\phi:\Cc_1\rightarrow\Cc_2$ be a $j$-equivalence. Then:
\begin{enumerate}
\item $\phi$ is a $j'$-equivalence for $0\leq j'\leq j$.
\item $\phi$ induces a Hamming-weight equivalence $\phi:(\Cc_1)_{[h,i]}\rightarrow(\Cc_2)_{[h,i]}$
for $0\leq h\leq i\leq j$.\\ In particular, $\wt(c_{[i,i]})=\wt(\phi(c)_{[i,i]})$ for $0\leq i\leq j$.
\end{enumerate}
\end{lemma}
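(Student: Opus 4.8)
The plan is to establish part (1) first and then read off part (2) from it. For part (1), fix $c\in\Cc_1$ and an integer $0\le j'\le j$, and use the $\F_q[x]$-module structure of $\Cc_1$: since multiplication by $x^{j-j'}$ shifts each coefficient up in degree by $j-j'$, one has $(x^{j-j'}c)_{[0,j]}=x^{j-j'}c_{[0,j']}$, and by $\F_q[x]$-linearity of $\phi$ also $\phi(x^{j-j'}c)_{[0,j]}=(x^{j-j'}\phi(c))_{[0,j]}=x^{j-j'}\phi(c)_{[0,j']}$. Applying the defining property of a $j$-equivalence to the codeword $x^{j-j'}c\in\Cc_1$, and using that multiplication by a power of $x$ leaves the weight unchanged, gives
$$\wt(c_{[0,j']})=\wt\bigl((x^{j-j'}c)_{[0,j]}\bigr)=\wt\bigl(\phi(x^{j-j'}c)_{[0,j]}\bigr)=\wt(\phi(c)_{[0,j']}),$$
so $\phi$ is a $j'$-equivalence.

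For part (2), I would first check that $\phi$ induces a well-defined $\F_q$-linear map $(\Cc_1)_{[h,i]}\to(\Cc_2)_{[h,i]}$ by $c_{[h,i]}\mapsto\phi(c)_{[h,i]}$; by linearity this reduces to the implication $c_{[h,i]}=0\Rightarrow\phi(c)_{[h,i]}=0$. If $h=0$ this is immediate from part (1). If $h\ge 1$, decompose $c_{[0,i]}=c_{[0,h-1]}+c_{[h,i]}$: the two summands involve monomials of degree in the disjoint ranges $\{0,\dots,h-1\}$ and $\{h,\dots,i\}$, so $\wt(c_{[0,i]})=\wt(c_{[0,h-1]})+\wt(c_{[h,i]})$, and likewise for $\phi(c)$. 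Since $\phi$ is both an $i$-equivalence and an $(h-1)$-equivalence by part (1) (note $h-1\le i\le j$), comparing the two decompositions yields $\wt(\phi(c)_{[h,i]})=\wt(c_{[h,i]})$. Specializing to $c_{[h,i]}=0$ gives well-definedness, and the identity $\wt(\phi(c)_{[h,i]})=\wt(c_{[h,i]})$ itself shows the induced map preserves Hamming weight; it is surjective because $\phi$ is, and a weight-preserving linear map is injective (every element of its kernel has weight $0$, hence the kernel is trivial), so it is a Hamming-weight equivalence. The ``in particular'' statement is the case $h=i$, for which $c_{[i,i]}=c[i]x^i$.

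Every step is a short computation except for the idea behind part (1): the point is to apply the $j$-equivalence hypothesis not to $c$ but to the shifted codeword $x^{j-j'}c$, which is legitimate precisely because $\Cc_1$ is an $\F_q[x]$-module. Once part (1) is in hand, part (2) is bookkeeping with the additivity of the weight over disjoint degree ranges; the only point that needs a little care there is the well-definedness of the induced map when $h\ge 1$, which is why part (1) is proved first.
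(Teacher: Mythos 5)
Your proof is correct and follows essentially the same route as the paper's: part (1) by applying the $j$-equivalence hypothesis to the shifted codeword $x^{j-j'}c$ (the paper does a single shift by $x$ and iterates, you shift by $x^{j-j'}$ in one step), and part (2) by additivity of the weight over the disjoint degree ranges $[0,h-1]$ and $[h,i]$. The extra care you take with well-definedness and injectivity of the induced map is a welcome addition that the paper leaves implicit.
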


\begin{proof}
To prove the first part of the thesis, it suffices to show that if $\phi$ is a $j$-equivalence then it is also a $(j-1)$-equivalence. Let $c\in\Cc_1$. Since $\wt(c_{[0,j-1]})=\wt((xc)_{[0,j]})$, we have that
$$\wt(c_{[0,j-1]})=\wt((xc)_{[0,j]})=\wt(\phi(xc)_{[0,j]})=\wt((x\phi(c))_{[0,j]})=\wt(\phi(c)_{[0,j-1]}).$$
It follows that $\phi$ is a $(j-1)$-equivalence.

To prove that the restriction $\phi:(\Cc_1)_{[h,i]}\rightarrow(\Cc_2)_{[h,i]}$ is a Hamming-weight equivalence for $0\leq h\leq i\leq j$, observe that for any $c\in\Cc_1$ $$\wt(c_{[h,i]})=\wt(c_{[0,i]})-\wt(c_{[0,h-1]})=\wt(\phi(c)_{[0,i]})-\wt(\phi(c)_{[0,h-1]})=\wt(\phi(c)_{[h,i]}).$$
\end{proof}

\begin{remark}
Fix $j\geq 0$. A $j$-equivalence may not be an isometry and vice versa, as the following examples show.
\begin{enumerate}
\item[(a)] Let $\Cc_1=\langle(1,x,1)\rangle_{\F_q[x]}$ and $\Cc_2=\langle(1,x,x)\rangle_{\F_q[x]}$. The $\F_q[x]$-linear map $\phi:\Cc_1\rightarrow\Cc_2$ given by $\phi((1,x,1))=(1,x,x)$ is an isometry, but not a $0$-equivalence. 
\item[(b)] Let $\Cc_1=\langle(1,x^2,x^3)\rangle_{\F_q[x]}$ and $\Cc_2=\langle(1,x^2,x^3+x^4)\rangle_{\F_q[x]}$. The $\F_q[x]$-linear map $\phi:\Cc_1\rightarrow\Cc_2$ given by $\phi((1,x^2,x^3))=(1,x^2,x^3+x^4)$ is a $3$-equivalence, but not an isometry. 
\end{enumerate}
\end{remark}

While a $j$-equivalence for a fixed value of $j$ may not be an isometry, every equivalence is a strong isometry.

\begin{proposition}\label{propositon:eqaresis}
An equivalence between convolutional codes is a strong isometry.
\end{proposition}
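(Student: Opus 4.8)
The plan is to show that an equivalence $\phi:\Cc_1\rightarrow\Cc_2$ preserves both the weight and the degree of every codeword, since these two conditions together constitute the definition of strong isometry. The key tool is Lemma \ref{lemma:decreasingequivalence}(2): since $\phi$ is a $j$-equivalence for every $j\in\N_0$, we have $\wt(c_{[i,i]})=\wt(\phi(c)_{[i,i]})$ for all $i\geq 0$ and all $c\in\Cc_1$. In other words, the $i$-th coefficient vector $c[i]\in\F_q^n$ and $\phi(c)[i]\in\F_q^n$ have the same Hamming weight for every $i$.

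First I would derive the weight-preservation. Since $\wt(c)=\sum_{i=0}^{\infty}\wt(c[i])$ and likewise $\wt(\phi(c))=\sum_{i=0}^{\infty}\wt(\phi(c)[i])$, and each summand agrees by the coefficient-wise statement above, we immediately get $\wt(c)=\wt(\phi(c))$; hence $\phi$ is an isometry. Next I would derive the degree-preservation. Fix $c\in\Cc_1$ and let $d=\deg(c)$. Then $c[d]\neq 0$, so $\wt(c[d])\geq 1$, hence $\wt(\phi(c)[d])\geq 1$, so $\phi(c)[d]\neq 0$ and $\deg(\phi(c))\geq d$. Conversely, for every $i>d$ we have $c[i]=0$, so $\wt(\phi(c)[i])=\wt(c[i])=0$, hence $\phi(c)[i]=0$; therefore $\deg(\phi(c))\leq d$. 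Combining the two inequalities gives $\deg(\phi(c))=d=\deg(c)$.

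Since $\phi$ is by hypothesis an $\F_q[x]$-isomorphism, and we have just shown it preserves weights and degrees, it is by definition a strong isometry. I do not expect any serious obstacle here: the only subtlety is making sure Lemma \ref{lemma:decreasingequivalence}(2) is applied for \emph{all} $i$, which is exactly what being an equivalence (a $j$-equivalence for all $j$) provides, so that both the infinite sum for the weight and the coefficient-by-coefficient comparison for the degree are justified.
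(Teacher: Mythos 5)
Your proof is correct and follows essentially the same route as the paper: both rely on Lemma \ref{lemma:decreasingequivalence} applied for all $j$, the paper obtaining weight-preservation via $\wt(c)=\lim_{j\to\infty}\wt(c_{[0,j]})$ while you equivalently sum the coefficient-wise weights $\wt(c[i])=\wt(\phi(c)[i])$. Your degree argument just spells out in detail what the paper leaves implicit, so there is nothing to add.
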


\begin{proof}
Let $\phi:\Cc_1\rightarrow\Cc_2$ be an equivalence and let $c\in\Cc_1$. Since $\phi$ is a $j$-equivalence for all $j\geq0$, we obtain that $$\wt(c)=\lim_{j\to\infty}\wt(c_{[0,j]})=\lim_{j\to\infty}\wt(\phi(c)_{[0,j]})=\wt(\phi(c)),$$
i.e., $\phi$ is weight-preserving. Moreover, $\wt(c_{[i,i]})=\wt(\phi(c)_{[i,i]})$ for all $i\geq 0$ by Lemma~\ref{lemma:decreasingequivalence}, hence  $\phi$ is degree-preserving.
\end{proof}

Notice that a strong isometry may not be an equivalence, as the next example shows.

\begin{example}
Let $\Cc_1=\langle (1,x^2)\rangle$, $\Cc_2=\langle (x,x^2)\rangle$. Then $\phi:\Cc_1\rightarrow\Cc_2$ defined as $\phi(p(x),x^2p(x))=(xp(x),x^2p(x))$ is a strong isometry which is not a $0$-equivalence, hence not an equivalence.
\end{example}

The next theorems provides us with a simple characterization of equivalences.

\begin{theorem}\label{theorem:equivalenceshape}
Let $\phi:\Cc_1\rightarrow\Cc_2$ be an equivalence of convolutional codes. There exist a permutation matrix $P\in\mathrm{GL}_n(\F_q)$ and a diagonal matrix $D=\mathrm{diag}(a_1,\dots,a_n)$ with $a_1\dots,a_n\in\F_q^*$ such that $\phi(c)=cPD$ for all $c\in\Cc_1$. In particular, every equivalence can be extended to an isometry of $\F_q[x]^n$.
\end{theorem}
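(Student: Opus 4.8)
The plan is to reduce the assertion to the classical finite-field setting by slicing $\phi$ coefficient by coefficient. By Proposition~\ref{propositon:eqaresis} the map $\phi$ is in particular an isometry, so Theorem~\ref{theorem:isometric} already produces a representation $\phi(c)=cPD$ with $D=\mathrm{diag}(a_1x^{m_1},\dots,a_nx^{m_n})$; however, one cannot hope to normalise the exponents $m_\ell$ to zero for this particular $D$, because on a fixed code a diagonal matrix with nonzero exponents can agree with a monomial matrix (the code may absorb the shifts). So instead of normalising $D$, I will construct a new, automatically unit-diagonal representation directly from the equivalence.

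Since $\phi$ is an equivalence, Lemma~\ref{lemma:decreasingequivalence}(2) gives $\wt(c[i])=\wt(\phi(c)[i])$ for all $i\in\N_0$ and all $c\in\Cc_1$; in particular $c[i]=0$ forces $\phi(c)[i]=0$. As $\phi$ is $\F_q$-linear, it follows that the rule $c[i]\mapsto\phi(c)[i]$ is a well-defined $\F_q$-linear map $\Psi_i\colon\Cc_1^{[i]}\to\Cc_2^{[i]}$ between the subspaces $\Cc_1^{[i]}:=\{c[i]:c\in\Cc_1\}\subseteq\F_q^n$ and $\Cc_2^{[i]}\subseteq\F_q^n$. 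Using $\phi(xc)=x\phi(c)$ and $(xc)[i]=c[i-1]$ one checks that $\Cc_1^{[i-1]}\subseteq\Cc_1^{[i]}$ and that $\Psi_i$ restricts to $\Psi_{i-1}$ on $\Cc_1^{[i-1]}$. Hence the increasing chain $\Cc_1^{[0]}\subseteq\Cc_1^{[1]}\subseteq\cdots$ of subspaces of $\F_q^n$ stabilises, and the maps $\Psi_i$ glue to a single $\F_q$-linear map $\Psi\colon V\to W$, with $V=\bigcup_i\Cc_1^{[i]}$ and $W=\bigcup_i\Cc_2^{[i]}$, such that $\Psi(c[i])=\phi(c)[i]$ for all $c\in\Cc_1$ and all $i\in\N_0$.

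The map $\Psi$ is a Hamming-weight-preserving $\F_q$-linear bijection between the linear codes $V$ and $W$ in $\F_q^n$: it preserves weight by the identity above, it is surjective because $\phi$ is, and its inverse is produced in the same way from $\phi^{-1}$, which is again an equivalence by Proposition~\ref{proposition:propertiesjequiv}(1). By the MacWilliams equivalence theorem, $\Psi$ extends to a monomial transformation of $\F_q^n$: there are a permutation matrix $P\in\mathrm{GL}_n(\F_q)$ and a diagonal matrix $D=\mathrm{diag}(a_1,\dots,a_n)$ with $a_1,\dots,a_n\in\F_q^*$ such that $\Psi(v)=vPD$ for all $v\in V$. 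Since the entries of $PD$ lie in $\F_q$, they commute with multiplication by powers of $x$, so
\[
\phi(c)=\sum_{i\ge 0}\phi(c)[i]\,x^i=\sum_{i\ge 0}\Psi(c[i])\,x^i=\sum_{i\ge 0}\big(c[i]\,PD\big)x^i=\Big(\sum_{i\ge 0}c[i]\,x^i\Big)PD=cPD
\]
for every $c\in\Cc_1$, which is the claimed shape; the final assertion is then immediate, since $v\mapsto vPD$ is an isometry of $\F_q[x]^n$ restricting to $\phi$ on $\Cc_1$.

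The heart of the argument, and the step requiring the most care, is the middle paragraph: one must verify that the slices $\Psi_i$ are well defined and mutually compatible and that the glued map $\Psi$ still captures all of $\phi$. The decisive point is the implication $c[i]=0\Rightarrow\phi(c)[i]=0$, which is exactly where being an equivalence is used in an essential way and where a mere strong isometry would not suffice; once $\Psi$ is available, the appeal to the MacWilliams theorem and the reassembly of $\phi$ are routine.
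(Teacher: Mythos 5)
Your proof is correct, and it takes a genuinely different route from the paper's. The paper disposes of this theorem in one line: an equivalence is a strong isometry (Proposition~\ref{propositon:eqaresis}), hence by Theorem~\ref{theorem:isometric} it has the form $c\mapsto cPD$ with $D=\mathrm{diag}(a_1x^{m_1},\dots,a_nx^{m_n})$, and Lemma~\ref{lemma:decreasingequivalence} is then invoked to eliminate the exponents $m_\ell$. You instead bypass Theorem~\ref{theorem:isometric} altogether: you slice $\phi$ into coefficient maps $\Psi_i\colon c[i]\mapsto\phi(c)[i]$, check well-definedness and compatibility along the chain $\Cc_1^{[0]}\subseteq\Cc_1^{[1]}\subseteq\cdots$, glue them into a single weight-preserving linear bijection $\Psi$ of block codes in $\F_q^n$, and apply the classical MacWilliams extension theorem. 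All the individual steps check out (the key implication $c[i]=0\Rightarrow\phi(c)[i]=0$ does follow from Lemma~\ref{lemma:decreasingequivalence}(2), and the reassembly $\phi(c)=\sum_i\Psi(c[i])x^i=cPD$ is valid precisely because $PD$ has constant entries). What your route buys: it is self-contained modulo the block-code MacWilliams theorem (on which Theorem~\ref{theorem:isometric} itself relies, as the paper remarks), it produces the constant-entry monomial matrix directly rather than by post-hoc normalization, and it makes explicit a subtlety the paper's one-line proof glosses over --- namely that the representation $cPD$ on a fixed code is not unique, so one cannot simply declare the $m_\ell$ of the given $D$ to be zero; one must either argue coordinate by coordinate which exponents the code ``sees'' or, as you do, rebuild the monomial map from scratch. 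The paper's route is shorter given the machinery already assembled; yours is the more transparent and arguably the more honest derivation.
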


\begin{proof}
The statement follows by combining Proposition~\ref{propositon:eqaresis}, Theorem~\ref{theorem:isometric}, and Lemma~\ref{lemma:decreasingequivalence}.
\end{proof}

The next proposition provides us with an effective criterion to check whether an isomorphism of convolutional codes is an equivalence.

\begin{proposition}\label{proposition:increasingequivalence}
Let $\phi:\Cc_1\rightarrow \Cc_2$ be an $\F_q[x]$-isomorphism of convolutional codes. Let $c_1,\dots,c_k$ be a basis of $\Cc_1$ and let $t=\max\{\deg(c_1),\dots,\deg(c_k),\deg(\phi(c_1)),\dots,\deg(\phi(c_k))\}$. If $\phi$ is a $j$-equivalence for some $j\geq t$, then it is an equivalence.
\end{proposition}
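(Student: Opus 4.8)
The plan is to prove directly that $\phi$ is a $j'$-equivalence for every $j'\in\N_0$. Since $\phi$ is a $j$-equivalence with $j\geq t$, Lemma~\ref{lemma:decreasingequivalence}(1) already covers all $0\leq j'\leq j$, so it is enough to establish the inductive step: if $\phi$ is a $j'$-equivalence for some $j'\geq t$, then $\phi$ is a $(j'+1)$-equivalence. Fix $c\in\Cc_1$. Because the Hamming weight of a truncation is the sum of the weights of its coefficients, $\wt(c_{[0,j'+1]})=\wt(c_{[0,j']})+\wt(c_{[j'+1,j'+1]})$ and likewise for $\phi(c)$, and $\wt(c_{[0,j']})=\wt(\phi(c)_{[0,j']})$ by the inductive hypothesis; hence the step reduces to the single identity $\wt(c_{[j'+1,j'+1]})=\wt(\phi(c)_{[j'+1,j'+1]})$, i.e.\ to showing that $\phi$ preserves the Hamming weight of the $(j'+1)$-st coefficient of every codeword.

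To do this I would transfer the $(j'+1)$-st coefficient down to degree $t$, where the $j'$-equivalence hypothesis applies. Write $c=\sum_{\ell=1}^k p_\ell(x)c_\ell$ with $p_\ell(x)=\sum_i a_{\ell,i}x^i$ (the $p_\ell$ being uniquely determined since $c_1,\dots,c_k$ is a basis), set $s=j'+1-t\geq 1$, put $q_\ell(x)=\sum_{i\geq s}a_{\ell,i}x^{i-s}$, and define $d=\sum_{\ell}q_\ell(x)c_\ell\in\Cc_1$. A direct coefficient computation, using $\deg(c_\ell)\leq t$ (so that only indices $0\leq m\leq t$ contribute) and, for the second identity, the $\F_q[x]$-linearity of $\phi$ together with $\deg(\phi(c_\ell))\leq t$, should give
\[
d[t]\;=\;\sum_{\ell=1}^{k}\sum_{m=0}^{t}a_{\ell,\,j'+1-m}\,c_\ell[m]\;=\;c[j'+1],
\]
and similarly (replacing $c_\ell$ by $\phi(c_\ell)$, using $\phi(d)=\sum_{\ell}q_\ell(x)\phi(c_\ell)$)
\[
\phi(d)[t]\;=\;\sum_{\ell=1}^{k}\sum_{m=0}^{t}a_{\ell,\,j'+1-m}\,\phi(c_\ell)[m]\;=\;\phi(c)[j'+1].
\]
Since $t\leq j'$, Lemma~\ref{lemma:decreasingequivalence}(2) applied to $d$ yields $\wt(d_{[t,t]})=\wt(\phi(d)_{[t,t]})$, that is $\wt(d[t])=\wt(\phi(d)[t])$. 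Combining this with the two displayed identities gives $\wt(c[j'+1])=\wt(\phi(c)[j'+1])$, which completes the inductive step, and therefore $\phi$ is a $j'$-equivalence for all $j'\in\N_0$, i.e.\ an equivalence.

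I do not expect a genuine obstacle here; the one point that needs care is the bookkeeping in the two coefficient identities, and in particular checking that the construction of $d$ is legitimate. One needs $s\geq 1$ so that each $q_\ell$ is an honest polynomial with no negative exponents, which is exactly where the hypothesis $j'\geq t$ enters; and one needs $\deg(c_\ell)\leq t$ and $\deg(\phi(c_\ell))\leq t$ so that, after expanding $d$ and $\phi(d)$ in the basis, the coefficient of $x^t$ picks up precisely the coefficients $a_{\ell,\,j'+1-m}$ of the $p_\ell$ with $0\leq m\leq t$ and nothing else. This is the reason the threshold in the statement is the maximum degree $t$ of the $c_\ell$ together with their images. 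Everything else follows from Lemma~\ref{lemma:decreasingequivalence} and the additivity of the Hamming weight over the coefficientwise decomposition of a truncation.
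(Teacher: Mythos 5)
Your proof is correct and takes essentially the same approach as the paper's: both reduce, via Lemma~\ref{lemma:decreasingequivalence} and additivity of the weight over coefficients, to showing that $\phi$ preserves $\wt(c[j'+1])$, and both do so by stripping off low-order coefficients of the $p_\ell$ and dividing by a power of $x$ so that this coefficient reappears at an index where the $j'$-equivalence hypothesis applies. The only cosmetic difference is that the paper shifts by a single power of $x$ (landing at index $j'$) while you shift by $j'+1-t$ (landing at index $t$); the degree bound $j'\geq t$ plays the same role in both.
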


\begin{proof}
By Lemma~\ref{lemma:decreasingequivalence} it suffices to prove that, if $\phi$ is a $j$-equivalence for some $j\geq t$, then it is also a $(j+1)$-equivalence. This is equivalent to showing that $\wt(c_{[j+1,j+1]})=\phi(c_{[j+1,j+1]})$ for all $c\in\Cc$, since 
$$\wt(c_{[0,j+1]})=\wt(c_{[0,j]})+\wt(c_{[j+1,j+1]})=\wt(\phi(c)_{[0,j]})+\wt(c_{[j+1,j+1]})$$ and $$\wt(\phi(c)_{[0,j+1]})=\wt(\phi(c)_{[0,j]})+\wt(\phi(c)_{[j+1,j+1]}).$$
Let $c=p_1(x)c_1+\dots+p_k(x)c_k\in\Cc$ and let
$$\bar c=\frac{1}{x}((p_1(x)-p_1(0))c_1+\dots+(p_k(x)-p_k(0))c_k)\in\Cc.$$
Since $j\geq t=\max\{\deg(c_1),\dots,\deg(c_k),\deg(\phi(c_1)),\dots,\deg(\phi(c_k))\}$, then $c_{[j+1,j+1]}=x\overline{c}_{[j,j]}$ and $\phi(c)_{[j+1,j+1]}=x\phi(\bar c)_{[j,j]}$. 
As a consequence, we obtain
$$\wt\left(c_{[j+1,j+1]}\right)=\wt\left(\bar c_{[j,j]}\right)=\wt\left(\phi(\bar c)_{[j,j]}\right)=\wt\left(\phi(c)_{[j+1,j+1]}\right),$$ which concludes the proof.
\end{proof}

\begin{corollary}\label{corollary:isometrytoequiv}
Let $\phi:\Cc_1\rightarrow \Cc_2$ be an isometry of convolutional codes. Let $\delta_1$ be the memory of~$\Cc$.  If $\phi$ is a $\delta_1$-equivalence, then 
it is an equivalence.
\end{corollary}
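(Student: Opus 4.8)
The plan is to deduce the statement from Proposition~\ref{proposition:increasingequivalence}. For that I need a basis $g_1,\dots,g_k$ of $\Cc_1$ with $t:=\max\{\deg(g_1),\dots,\deg(g_k),\deg(\phi(g_1)),\dots,\deg(\phi(g_k))\}\le\delta_1$; once such a basis is in hand, $\phi$ is a $\delta_1$-equivalence with $\delta_1\ge t$, and Proposition~\ref{proposition:increasingequivalence} immediately yields that $\phi$ is an equivalence. I would take $g_1,\dots,g_k$ to be the rows of a minimal (row-reduced) generator matrix of $\Cc_1$; by the definition of the memory this is a basis of $\Cc_1$ with $\deg(g_i)\le\delta_1$ for all $i$, so half of the bound defining $t$ is automatic.

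The heart of the matter is therefore to show $\deg(\phi(g_i))\le\delta_1$ for every $i$, and this is exactly where the two hypotheses on $\phi$ combine. Since $\phi$ is an isometry, $\wt(\phi(g_i))=\wt(g_i)$. Since $\deg(g_i)\le\delta_1$, the truncation $(g_i)_{[0,\delta_1]}$ equals $g_i$, so $\wt(g_i)=\wt\big((g_i)_{[0,\delta_1]}\big)$; and since $\phi$ is a $\delta_1$-equivalence, $\wt\big((g_i)_{[0,\delta_1]}\big)=\wt\big(\phi(g_i)_{[0,\delta_1]}\big)$. Chaining these equalities gives $\wt(\phi(g_i))=\wt\big(\phi(g_i)_{[0,\delta_1]}\big)$. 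As $\wt(\phi(g_i))=\wt\big(\phi(g_i)_{[0,\delta_1]}\big)+\sum_{h>\delta_1}\wt\big(\phi(g_i)[h]\big)$, this forces $\phi(g_i)[h]=0$ for all $h>\delta_1$, that is, $\deg(\phi(g_i))\le\delta_1$.

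With $t\le\delta_1$ established, Proposition~\ref{proposition:increasingequivalence} applied to the basis $g_1,\dots,g_k$ finishes the proof. I do not expect a genuine obstacle here: the argument is short, and its only real input beyond the two hypotheses is the characterization of the memory $\delta_1$ as the maximal degree of a row of a minimal generator matrix, i.e.\ the fact that $\Cc_1$ admits a basis all of whose elements have degree at most $\delta_1$. Note that Theorem~\ref{theorem:isometric} is not needed for this route; alternatively one could write $\phi(c)=cPD$ as there and check that the exponents of $D$ at the coordinates in the support of $\Cc_1$ must be nonpositive, so that $\deg(\phi(c))\le\deg(c)$ for all $c\in\Cc_1$, but the weight-counting argument above is more direct.
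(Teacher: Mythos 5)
Your proof is correct and follows essentially the same route as the paper: take a row-reduced basis of $\Cc_1$ (so all its elements have degree at most $\delta_1$), use the isometry together with the $\delta_1$-equivalence to force $\deg(\phi(g_i))\leq\delta_1$, and then invoke Proposition~\ref{proposition:increasingequivalence}. The paper states the degree bound on $\phi(g_i)$ without detail, and your weight-counting chain $\wt(\phi(g_i))=\wt(g_i)=\wt\bigl((g_i)_{[0,\delta_1]}\bigr)=\wt\bigl(\phi(g_i)_{[0,\delta_1]}\bigr)$ is exactly the intended justification of that step.
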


\begin{proof}
Let $c_1,\dots,c_k$ be a row reduced basis of $\Cc_1$ such  that $\delta_1=\deg(c_1)\geq\deg(c_2)\geq\dots\geq\deg(c_k)$. Since $\phi$ is an isometry and a $\delta_1$-equivalence, we obtain that $\deg(\phi(c_i))\leq\delta_1$ for $1\leq i\leq k$. We conclude by Proposition~\ref{proposition:increasingequivalence}.
\end{proof}

\begin{remark}
When $\delta_1=0$, Corollary~\ref{corollary:isometrytoequiv} yields the MacWilliams Extension Theorem for linear block codes. Notice however that this is not a new proof of the MacWilliams Extension Theorem, as the proof of Theorem~\ref{theorem:isometric} relies on it.
\end{remark}

The next theorem provides us with a partial characterization of $j$-equivalences.

\begin{theorem}\label{theorem:characjequiv}
Let $\phi:\Cc_1\rightarrow\Cc_2$ be a $j$-equivalence of convolutional codes. There exist a permutation matrix $P\in\mathrm{GL}_n(\F_q)$ and a diagonal matrix $D=\mathrm{diag}(a_1,\dots,a_n)$ with $a_1\dots,a_n\in\F_q^*$ such that $\phi(c)_{[0,j]}=c_{[0,j]}PD$ for all $c\in\Cc_1$. 
\end{theorem}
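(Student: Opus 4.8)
The plan is to reduce the statement to the block-code case by passing to a suitable linear block code built from truncations, and then invoke the MacWilliams Extension Theorem (which underlies Theorem~\ref{theorem:isometric}). The key observation is that a $j$-equivalence $\phi$ induces, via $c\mapsto c_{[0,j]}$, an $\F_q$-linear Hamming-weight-preserving map on truncated codewords, but the truncations live in $\F_q[x]^n$ rather than in $\F_q^n$, so one has to unwind the polynomial coordinates. Concretely, I would identify $c_{[0,j]}=\sum_{i=0}^j c[i]x^i$ with the vector $(c[0],c[1],\dots,c[j])\in(\F_q^n)^{j+1}=\F_q^{n(j+1)}$, organized so that the $\ell$-th ``block'' of $n(j+1)$ coordinates groups together the coefficients of $x^0,\dots,x^j$ in the $\ell$-th polynomial coordinate. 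Under this identification, $\wt(c_{[0,j]})$ is exactly the Hamming weight of the associated vector in $\F_q^{n(j+1)}$, and $(\Cc_1)_{[0,j]}$, $(\Cc_2)_{[0,j]}$ become linear block codes $B_1,B_2\subseteq\F_q^{n(j+1)}$ of the same dimension (by Proposition~\ref{proposition:propertiesjequiv}(2)).

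Next I would check that $\phi$ descends to a well-defined $\F_q$-linear map $\bar\phi:B_1\to B_2$. This requires that $c_{[0,j]}=c'_{[0,j]}$ implies $\phi(c)_{[0,j]}=\phi(c')_{[0,j]}$; since $\phi$ is $\F_q[x]$-linear, it suffices to observe that $c_{[0,j]}=0$ forces $c\in x^{j+1}\Cc_1$, hence $\phi(c)\in x^{j+1}\Cc_2$ and $\phi(c)_{[0,j]}=0$. By definition of $j$-equivalence, $\bar\phi$ preserves Hamming weight on $B_1$, so $\bar\phi$ is a weight-preserving linear isomorphism of block codes. By the MacWilliams Extension Theorem, $\bar\phi$ extends to a monomial transformation of $\F_q^{n(j+1)}$, i.e., there is a permutation $\sigma$ of the $n(j+1)$ coordinates and nonzero scalars realizing $\bar\phi$ on $B_1$.

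The main obstacle is then to show that the monomial transformation given by MacWilliams can be taken to respect the block structure, i.e., that it permutes the $n$ polynomial coordinates as blocks (giving the matrix $P$) and acts ``diagonally in $x$'' within each block by a single scalar $a_\ell$ (giving $D=\mathrm{diag}(a_1,\dots,a_n)$), rather than mixing coefficients of different powers of $x$. Here I would use the $\F_q[x]$-linearity of $\phi$ in the form of Lemma~\ref{lemma:decreasingequivalence}: multiplication by $x$ on $\Cc_1$ corresponds, after truncation, to the shift $c[i]\mapsto c[i-1]$, and $\phi$ intertwines these shifts up to truncation. Concretely, for $c\in\Cc_1$ one has $(xc)_{[0,j]}$ determined by $c_{[0,j-1]}$, and applying $\phi$ and comparing with $x\phi(c)$ shows that the induced action of $\bar\phi$ is compatible with the shift; an inductive argument on the power of $x$ then forces the monomial map to be of the claimed block-diagonal shape, so that $\phi(c)_{[0,j]}=c_{[0,j]}PD$. (One should note that, unlike in Theorem~\ref{theorem:equivalenceshape}, the exponents $m_\ell$ disappear because the truncation up to degree $j$ kills any shift by a positive power of $x$ that would push support beyond degree $j$, while a negative shift would not preserve $(\Cc_1)_{[0,j]}\subseteq\F_q[x]^n$; this is exactly why only $P$ and a scalar diagonal survive.) I expect the bookkeeping in this last step — keeping track of which coefficients can be permuted and ruling out genuine mixing across powers of $x$ — to be the delicate part, whereas the reduction to MacWilliams is routine.
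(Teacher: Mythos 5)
Your reduction to a Hamming-weight-preserving linear map $\bar\phi:B_1\to B_2$ between block codes in $\F_q^{n(j+1)}$ and the appeal to the MacWilliams Extension Theorem are fine (though your justification of well-definedness is off: $c_{[0,j]}=0$ does \emph{not} imply $c\in x^{j+1}\Cc_1$, since $\Cc_1$ need not be saturated with respect to multiplication by $x$ --- e.g.\ $\Cc_1=\langle(x,0)\rangle$; the correct and immediate argument is that $\wt(\phi(c)_{[0,j]})=\wt(c_{[0,j]})=0$). The genuine gap is the step you yourself flag as ``delicate'': MacWilliams only hands you \emph{some} monomial transformation of all $n(j+1)$ coordinates restricting to $\bar\phi$ on $B_1$, and such a transformation is far from unique (any two differ by a monomial map fixing $B_1$ pointwise, and coordinates on which $B_1$ projects proportionally or trivially can be permuted and rescaled freely). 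Showing that among all of them one can be chosen that permutes the $n$ polynomial coordinates as blocks and acts by a \emph{single} scalar within each block is precisely the content of the theorem, and your sketch --- intertwining with the shift and ``an inductive argument on the power of $x$'' --- does not close it: the shift relation only gives equalities of coordinate functionals \emph{as functionals on $\Cc_1$}, which does not force them to be the same coordinate functional, so the induction does not straightforwardly propagate a single permutation $\sigma$ of $\{1,\dots,n\}$ and a single scalar per block across all powers of $x$.

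The paper takes a different route that sidesteps this bookkeeping entirely: it forms the auxiliary convolutional codes $\bar\Cc_1=\langle (c_1)_{[0,j]},\dots,(c_k)_{[0,j]}\rangle_{\F_q[x]}$ and $\bar\Cc_2=\langle \phi(c_1)_{[0,j]},\dots,\phi(c_k)_{[0,j]}\rangle_{\F_q[x]}$, checks that the induced map $\bar\phi$ between them is a $j$-equivalence whose generators all have degree at most $j$, so that Proposition~\ref{proposition:increasingequivalence} upgrades it to a full equivalence, and then applies Theorem~\ref{theorem:equivalenceshape} (resting on Gluesing-Luerssen's classification of isometries, Theorem~\ref{theorem:isometric}), which directly produces $P$ and $D=\mathrm{diag}(a_1,\dots,a_n)$ of the required block shape. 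If you want to salvage your approach, you would essentially have to reprove a truncated analogue of that classification; it is cleaner to reuse the machinery already established in Section~2 as the paper does.
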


\begin{proof}
Let $\Cc_1=\langle c_1,\dots, c_k\rangle_{\F_q[x]}$ and define $\bar\Cc_1,\bar\Cc_2$ as
$$\bar\Cc_1=\langle (c_1)_{[0,j]},\dots,(c_k)_{[0,j]}\rangle_{\F_q[x]}\text{ and }\bar\Cc_2=\langle \phi(c_1)_{[0,j]},\dots,\phi(c_k)_{[0,j]}\rangle_{\F_q[x]}.$$
Notice that for every $c\in\Cc_1$ there exists $\bar c\in\bar\Cc_1$ such that $\bar c_{[0,j]}=c_{[0,j]}$. Indeed, we have that
\begin{equation}\label{equation:cbar}
\begin{split}
     c_{[0,j]}&=\left(\sum_{i=1}^kp_i(x)c_i\right)_{[0,j]}=\left(\sum_{i=1}^kp_i(x)((c_i)_{[0,j]}+(c_i)_{[j+1,\deg(c_i)]})\right)_{[0,j]}\\
     &=\left(\sum_{i=1}^kp_i(x)((c_i)_{[0,j]})\right)_{[0,j]}=\bar c_{[0,j]}.
\end{split}
\end{equation}
Moreover, the $j$-equivalence $\phi$ induces an $\F_q[x]$-linear isomorphism $\bar\phi$ between $\bar\Cc_1$ and $\bar\Cc_2$ defined by $\bar\phi((c_i)_{[0,j]})=\phi(c_i)_{[0,j]}$ for $1\leq i\leq k$. Since $\phi$ is a $j$-equivalence, then
\begin{equation*}
\begin{split}
    \wt\left(\bar c_{[0,j]}\right)&=\wt\left(\left(\sum_{i=1}^kp_i(x)(c_i)_{[0,j]}\right)_{[0,j]}\right)=\wt\left(\left(\sum_{i=1}^kp_i(x)c_i\right)_{[0,j]}\right)\\
    &=\wt\left(\left(\sum_{i=1}^kp_i(x)\phi(c_i)\right)_{[0,j]}\right)=\wt\left(\left(\sum_{i=1}^kp_i(x)\phi(c_i)_{[0,j]}\right)_{[0,j]}\right)=\wt\left(\phi(\bar c)_{[0,j]}\right), 
\end{split}
\end{equation*}
and therefore also $\bar\phi$ is a $j$-equivalence. Moreover, since
$$j\geq \max\{\deg((c_1)_{[0,j]}),\dots,\deg((c_k)_{[0,j]}),\deg(\phi(c_1)_{[0,j]}),\dots,\deg(\phi(c_k)_{[0,j]})\},$$ then $\bar\phi$ is an equivalence by Proposition~\ref{proposition:increasingequivalence}.
By Theorem~\ref{theorem:equivalenceshape} there exist a permutation matrix $P\in\mathrm{GL}_n(\F_q)$ and a diagonal matrix $D=\mathrm{diag}(a_1,\dots,a_n)$ with $a_1\dots,a_n\in\F_q^*$ such that $\phi(\bar c)=\bar cPD$ for all $\bar c\in\bar\Cc_1$. For $c\in\Cc$, following the notation of \eqref{equation:cbar}, we have that
\begin{equation*}
    \begin{split}
        \phi(c)_{[0,j]}&=\left(\sum_{i=1}^kp_i(x)\phi(c_i)\right)_{[0,j]}=\left(\sum_{i=1}^kp_i(x)\phi(c_i)_{[0,j]}\right)_{[0,j]}\\
        &=\phi(\bar c)_{[0,j]}=(\bar cPD)_{[0,j]}=\bar c_{[0,j]}PD=c_{[0,j]}PD,
    \end{split}
\end{equation*}
which concludes the proof.
\end{proof}

\section{Generalized column distances}

A generator matrix $G$ of an $(n,k,\delta)$ convolutional code $\Cc$ can be expressed as $G=\sum G_ix^i$ with $G_i\in\F^{k\times n}$. The $j$-\textbf{truncated sliding generator} matrix $G_j^c$ is defined as
$$G_j^c=\begin{pmatrix}
G_0&G_1&\dots& G_j\\
0&G_0&\dots& G_{j-1}\\
\vdots&\ddots&&\vdots\\
0&\dots&0&G_0
\end{pmatrix}\in\F_q^{k(j+1)\times n(j+1)}.$$
For $j\geq\delta_1$, we define the matrix ${G_j^c}'$ as
\begin{equation*}
    {G_j^c}'=\begin{pmatrix}
G_0&G_1&\dots& G_{\delta_1}&0&\dots&0\\
0&G_0&\dots& G_{\delta_1-1}&G_{\delta_1}&\ddots&0\\
\vdots&\ddots&&\vdots&\ddots&\ddots&\vdots\\
0&\dots&0&G_0&G_1&\dots&G_{\delta_1}
\end{pmatrix}\in\F_q^{k(j+1)\times n(j+1+\delta_1)}.
\end{equation*}
The $j$-\textbf{truncated} code of $\Cc$ is $$\Cc(j)=\{(v^0,v^1,\dots,v^j)G_j^c:v^i\in\F_q^k,v^0\neq 0\}\subseteq\F_q^{n(j+1)}.$$
Equivalently, $$\Cc(j)=\{(p(x)G)_{[0,j]} : p(x)\in\F_q[x]^k, p(0)\neq 0\}\subseteq\Cc.$$
Notice that it may happen that $0\not\in\Cc(j)$, so $\Cc(j)$ is not a vector space in general. However, one always has $\Cc(0)\cup\{0\}=\Cc[0]\subseteq\F_q^n$. 

For every $j\geq0$ the $j$-th \textbf{column distance} of a convolutional code $\Cc$ is defined as
\begin{equation*}
    \dd_j^c(\Cc)=\min\wt\, \Cc(j)=\min\left\{\wt\left((v^0,v^1,\dots,v^j)G_j^c\right):v^i\in\F_q^k,v^0\neq 0\right\}.
\end{equation*}
It follows from the definition that $d_j^c(\Cc)\leq d_{j+1}^c(\Cc)$ for $j\geq 0$.

When the code is noncatastrophic, we have the following equivalent formulation
\begin{equation*}
\dd_j^c(\Cc)=\min\left\{\wt\left(c_{[0,j]}(x)\right):c(x)\in\Cc\text{ and }c(0)\neq0\right\}.
\end{equation*}
It follows that $d_j(\Cc)>0$ 
for a noncatastrophic code $\Cc$ and $j\geq 0$.
In the case of catastrophic convolutional codes, instead, it may be that $\dd_j^c(\Cc)=0$ for some $j\geq 0$. However, the following proposition shows that for every $\Cc$ there exists a $\hat j$ such that $\dd_{\hat j}^c(\Cc)>0$.

\begin{proposition}\label{proposition:coldist>zero}
Let $\Cc$ be an $(n,k,\delta)$ convolutional code. There exists $\hat j\geq 0$ such that $\dd_{\hat j}^c(\Cc)>0$.
\end{proposition}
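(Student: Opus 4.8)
The plan is to translate the statement into a divisibility condition on a fixed generator matrix and then settle it with the Smith normal form over $\F_q[x]$. Fix a generator matrix $G\in\F_q[x]^{k\times n}$ of $\Cc$; since $\rk\Cc=k$, the rows of $G$ are $\F_q[x]$-linearly independent, so $G$ has rank $k$. From the description $\Cc(j)=\{(p(x)G)_{[0,j]}:p(x)\in\F_q[x]^k,\ p(0)\neq0\}$ one sees that $\dd_j^c(\Cc)=0$ if and only if there is $p(x)\in\F_q[x]^k$ with $p(0)\neq0$ and $(p(x)G)_{[0,j]}=0$, i.e.\ $x^{j+1}$ divides every entry of $p(x)G$. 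If such a $p$ exists for some $j$, the same $p$ works for every smaller index, so $\{j:\dd_j^c(\Cc)=0\}$ is downward closed; hence it suffices to exhibit one $\hat j$ for which no such $p$ exists.

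Next I would bring in the Smith normal form: there exist $A\in\mathrm{GL}_k(\F_q[x])$ and $B\in\mathrm{GL}_n(\F_q[x])$ with $AGB=(\,\mathrm{diag}(\gamma_1,\dots,\gamma_k)\mid 0\,)$, where $\gamma_1\mid\dots\mid\gamma_k$ are nonzero (they are nonzero because $G$ has rank $k$). Writing $\gamma_i=x^{a_i}f_i$ with $f_i(0)\neq0$, the divisibility chain forces $a_1\le\dots\le a_k$, and I claim $\hat j:=a_k$ works. (Since $\prod_i\gamma_i$ equals, up to a unit, the gcd of the $k\times k$ minors of $G$, and for a row reduced generator matrix these minors have degree at most $\delta$, one even gets $a_k\le\deg\gamma_k\le\sum_i\deg\gamma_i\le\delta$, so in fact $\hat j=\delta$ is admissible; but for the existence statement only finiteness of $a_k$ matters.)

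To finish, suppose for contradiction that $\dd_j^c(\Cc)=0$ for some $j\ge a_k$, and pick $p$ as above. Set $q(x)=p(x)A^{-1}\in\F_q[x]^k$. Because $\det A\in\F_q^{*}$, the matrix $A(0)$ is invertible over $\F_q$, so $q(0)=p(0)A(0)^{-1}\neq0$; choose an index $i_0$ with $q_{i_0}(0)\neq0$. Multiplying $p(x)G$, all of whose entries are divisible by $x^{j+1}$, on the right by $B$ preserves this divisibility, and $p(x)GB=q(x)(AGB)=(\gamma_1q_1,\dots,\gamma_kq_k,0,\dots,0)$, whence $x^{j+1}\mid\gamma_{i_0}q_{i_0}$. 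Since $x\nmid q_{i_0}$ and $\F_q[x]$ is a UFD, this gives $x^{j+1}\mid\gamma_{i_0}$, i.e.\ $j+1\le a_{i_0}\le a_k\le j$, a contradiction. The only genuinely conceptual point is recognizing that the catastrophic phenomenon ($\dd_j^c=0$) is governed exactly by the $x$-adic valuations of the elementary divisors of $G$, and that once $j$ exceeds the largest of these no further cancellation modulo $x^{j+1}$ is possible; the rest — in particular checking that $p(0)\neq0$ is preserved under multiplication by $\mathrm{GL}_k(\F_q[x])$ — is routine bookkeeping.
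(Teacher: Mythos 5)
Your proof is correct, but it takes a genuinely different route from the paper. The paper fixes a row reduced basis, assumes $\dd_j^c(\Cc)=0$ for all $j\geq\delta_1$, and applies a pigeonhole argument to the last $n\delta_1$ coordinates of the vectors $v_j{G_j^c}'$: two of these "states" must coincide, producing codewords $s_1,s_2$ with $x^{j_2-j_1}s_1=s_2$ and incompatible coefficient polynomials, contradicting linear independence of the basis. You instead reduce everything to the Smith normal form of $G$ over the PID $\F_q[x]$, observing that $\dd_j^c(\Cc)=0$ exactly when some $p$ with $p(0)\neq0$ has $x^{j+1}\mid pG$, and that after the unimodular changes of coordinates this forces $x^{j+1}$ to divide an invariant factor $\gamma_{i_0}$; your bookkeeping (invertibility of $A(0)$, preservation of divisibility under right multiplication by $B$) is all sound. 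The trade-off is instructive: the paper's argument is elementary and self-contained but yields only an implicit, exponentially large bound on $\hat j$ coming from the pigeonhole, whereas your argument pinpoints the obstruction as the $x$-adic valuations $a_1\leq\dots\leq a_k$ of the invariant factors and gives the sharp threshold $\hat j=a_k$ together with the clean explicit estimate $a_k\leq\deg\bigl(\prod_i\gamma_i\bigr)\leq\delta$, since $\prod_i\gamma_i$ is, up to a unit, the gcd of the maximal minors of $G$. (Your converse remark is also right: taking $q=e_k$ and $p=e_kA$ shows $\dd_{a_k-1}^c(\Cc)=0$, so the characterization is exact, though this is not needed for the statement.)
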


\begin{proof}
Let $c_1,\dots,c_k$ be a row reduced basis for $\Cc$. Since $d_j^c(\Cc)$ is a weakly increasing function of $j$, the thesis is equivalent to $d_j^c(\Cc)>0$ for $j$ sufficiently large. If $\dd_j^c(\Cc)=0$ for every $j\geq\delta_1$, then for every $j\geq\delta_1$ there exists $v_j\in\F_q^{k(j+1)}$ such that $v_j^0\neq 0$ and $v_jG_j^c=0$. 
By construction, the first $n(j+1)$ entries of $v_j{G_j^c}'$ are equal to $0$. By the pigeonhole principle there exist $\delta_1\leq j_1<j_2$ such that the last $n\delta_1$ entries of $v_{j_1}{G_{j_1}^c}'$ coincide with the last $n\delta_1$ entries of $v_{j_2}{G_{j_2}^c}'$. This implies that there exists two elements $s_1,s_2\in\Cc$ (corresponding to $v_{j_1}{G_{j_1}^c}'$ and $v_{j_2}{G_{j_2}^c}'$), polynomials $p_1,\dots,p_k,q_1,\dots,q_k$, and two indices $1\leq i_1,i_2\leq k$ such that $s_1=\sum p_ic_i$, $s_2=\sum q_ic_i$, $p_{i_1}(0)\neq0$, $q_{i_2}(0)\neq0$, and $x^{ j_2-j_1}s_1=s_2$. Since
$$x^{ j_2-j_1}\sum_{i=1}^k p_ic_i=\sum_{i=1}^k q_ic_i,\;\text{ then }\;\sum_{i=1}^k(x^{ j_2-j_1}p_i-q_i)c_i=0.$$
As $q_{i_2}(0)\neq 0$, we have that $x^{ j_2-j_1}p_{i_2}-q_{i_2}\neq0$. This contradicts the assumption that $c_1,\ldots,c_k$ is a basis of $\Cc$. We conclude that there exists a $\hat j$ such that $\dd_{\hat j}^c(\Cc)>0$.
\end{proof}

In~\cite{CFN17}, the authors introduce the concept of generalized column distances for noncatastrophic convolutional codes. In this paper, we extend their definition to arbitrary codes and establish some properties of these invariants.

\begin{definition}
Let $\Cc$ be an $(n,k,\delta)$ convolutional code and $G$ a generator matrix for $\Cc$. For every $j\geq0$ and $1\leq r\leq k$ we define the $(r,j)$-\textbf{generalized column distance} as
\begin{equation*}
\djr(\Cc)=\min\left\{\lvert\supp\{v_1G_j^c,\dots,v_rG_j^c\}\rvert:v_i\in\F_q^{k(j+1)}\text{ and }\dim\left(\langle v_1^0,\dots,v_r^0\rangle_{\F_q}\right)=r\right\}.
\end{equation*}
We say that $v_1,\ldots,v_r\in\F_q^{k(j+1)}$ {\bf realize} the $(r,j)$-generalized column distance with respect to $G$ if $\dim\left(\langle v_1^0,\dots,v_r^0\rangle_{\F_q}\right)=r$ and $\djr(\Cc)=\lvert\supp\{v_1G_j^c,\dots,v_rG_j^c\}\rvert$.

Finally, we define the $r$-\textbf{generalized column distance} as
\begin{equation*}
\dd^r(\Cc)=\lim_{j\to\infty}\djr(\Cc).
\end{equation*}
\end{definition}
Well-definedness of the $r$-generalized column distance follows from items 4 and 7 in Proposition~\ref{proposition:basicproperties}.
We stress that the generalized column distances do not depend on the choice of a generator matrix. Indeed, let $c_1\dots,c_k$ be the rows of a generator matrix $G$ of $\Cc$. Then
\begin{equation*}
\begin{split}
    \djr(\Cc)=\min\{\lvert\supp(V_{[0,j]})\rvert:\; &V\subseteq\Cc\text{ is an }\F_q\text{-linear space with }\dim(V)=r\text{ and}\\
    &\text{for all } v=p_1c_1+\dots+p_kc_k\in V\setminus\{0\}\text{ there is }i\text{ such that }p_i(0)\neq0\}.
\end{split}
\end{equation*}
If $\tilde G$ is another generator matrix of $\Cc$ with rows $\tilde c_1,\dots,\tilde c_k$, then there exists a unimodular matrix $U$ such that $\tilde G=UG$, i.e., $\tilde c_i=u_{i,1}c_1+\dots+u_{i,k}c_k$ for $1\leq i\leq k$. For $v=p_1c_1+\dots p_kc_k=\tilde p_1\tilde c_1+\dots \tilde p_k\tilde c_k\in \Cc$, we have that
$v=\tilde p_1(u_{1,1}c_1+\dots+u_{1,k}c_k)+\dots+\tilde p_k(u_{k,1}c_1+\dots+u_{k,k}c_k)$, hence 
$$p_i=\sum_{s=1}^k (\tilde p_su_{s,i})$$
for $1\leq i\leq k$.
Therefore, if there exists $i$ such that $p_i(0)\neq0$, then there exists $s$ such that $\tilde p_s(0)\neq0$. 

While the $(r,j)$-generalized column distance of a code $\Cc$ does not depend on the choice of a generator matrix $G$ of $\Cc$, the vectors $v_1,\dots,v_r$ that realize $\djr(\Cc)$ depend on the choice of the matrix $G$, as the next example shows.

\begin{example}
Let $\Cc$ be the code generated by the matrix $G$ whose rows are $(1,0,1)$ and $(0,1,0)$. Then $d_1^2(\Cc)=3$ and it is realized by the vectors $(1,0,0,0)$ and $(0,1,0,0)$ with respect to $G$. The matrix $G'$ whose rows are $(1,x,1)$ and $(0,1,0)$ is also a generator matrix of $\Cc$ and $d_1^2(\Cc)$ is realized by $(1,0,0,-1)$ and $(0,1,0,0)$ with respect to $G'$.
\end{example}

\begin{remark}
The $(r,j)$-generalized column distances of a noncatastrophic convolutional code are defined in~\cite{CFN17} as
\begin{equation*}
    \bar\djr(\Cc)=\min\left\{\lvert\supp\{(c_1)_{[0,j]},\dots,(c_r)_{[0,j]}\}\rvert:c_i\in\Cc\text{ and }\dim\left(\langle c_1(0),\ldots,c_r(0)\rangle_{\F_q}\right)=r\right\}.
\end{equation*}
The set of which we take the minimum is always nonempty, as $\dim(\Cc[0])=\rk(G_0)=k$ for a noncatastrophic $\Cc$ of $\rk(\Cc)=k$. 
Moreover, for a noncatastrophic code $\Cc$
$$\dim\left(\langle v_1^0,\dots,v_r^0\rangle_{\F_q}\right)=r\iff\dim\left(\langle v_1G[0],\dots,v_rG[0]\rangle_{\F_q}\right)=r,$$ where $G[0]$ denotes the matrix $G_j^c$ evaluated at $0$. In other words, $v_iG[0]=v_i^0G_0$ for $1\leq i\leq r$.
It follows that $\djr(\Cc)=\bar\djr(\Cc)$. 
\end{remark}

If the code $\Cc$ is catastrophic, then $\djr(\Cc)$ is well-defined, while $\bar\djr(\Cc)$ may not be. 

\begin{example}
Let $\Cc=\langle (1,0),(0,x)\rangle_{\F_q[x]}$. By a straightforward computation, we have that $\dd_0^1(\Cc)=0$, $\dd_0^2(\Cc)=1$, $\dd_j^1(\Cc)=1$ and $\dd_j^2(\Cc)=2$ for all $j\geq1$. Therefore, $\dd^1(\Cc)=1$ and $\dd^2(\Cc)=2$. Notice that in this case $\bar\dd_0^2(\Cc)$ is not defined, as $\dim(\Cc[0])=1$.
\end{example}

In the next proposition we collect several basic properties of generalized column distances. In particular, we have that the $(1,j)$-generalized column distance is exactly the $j$-column distance of the code and that generalized column distances are non-decreasing in both $r$ and $j$. Items~1, 2, and~6 were proved in the noncatastrophic case in~\cite{CFN17}.

\begin{proposition}\label{proposition:basicproperties}
Let $\Cc$ be an $(n,k,\delta)$ convolutional code and let $\Dd$ be a subcode of $\Cc$. Then
\begin{enumerate}
\item $\dd_j^1(\Cc)=\dd_j^c(\Cc)$ for $j\geq0$.
\item $\dd_j^r(\Cc)\leq\dd_j^{r+1}(\Cc)$ for $j\geq 0$ and $1\leq r<k$ and the inequality is strict if $\Cc$ is noncatastrophic.
\item $\dd^r(\Cc)<\dd^{r+1}(\Cc)$ for $1\leq r<k$.
\item $\dd^r_{j}(\Cc)\leq\dd_{j+1}^{r}(\Cc)$ for $j\geq 0$ and $1\leq r\leq k$.
\item $\djr(\Cc)\leq\djr(\Dd)$ for $j\geq0$ and $1\leq r\leq\rk(\Dd)$.
\item If $\Cc$ is noncatastrophic, then $\djr(\Cc)\leq(j+1)(n-k)+r$ for $j\geq 0$ and $1\leq r\leq k$.
\item $\djr(\Cc)\leq \dd^r(\Cc)\leq n(\delta_1+1)$  for $j\geq0$ and $1\leq r\leq k$, where $\delta_1$ is the memory of $\Cc$.
\end{enumerate}
\end{proposition}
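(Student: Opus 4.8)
The plan is to prove the seven items in an order that respects their logical dependencies, since item 4 relies on item 1 (via the column-distance case) and item 7 uses items 4 and 6 (for the noncatastrophic case) together with an independent argument in the general case.

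First I would dispatch item 1: when $r=1$, the condition $\dim\langle v_1^0\rangle_{\F_q}=1$ is just $v_1^0\neq 0$, so the defining minimum for $\dd_j^1(\Cc)$ is exactly the minimum of $\wt(v_1 G_j^c)$ over $v_1$ with $v_1^0\neq 0$, which is $\dd_j^c(\Cc)$ by definition. Item 5 is equally immediate using the generator-matrix-free description of $\djr$ given before the proposition: every $\F_q$-linear space $V\subseteq\Dd$ with the required property is also such a space inside $\Cc$, so the minimum defining $\djr(\Cc)$ is taken over a larger set and is therefore no larger; I should just note that a row reduced basis of $\Dd$ can be completed to one of $\Cc$ so the "$p_i(0)\neq 0$ for some $i$" condition transfers correctly, or simply invoke the basis-independence already established. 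Item 2 follows because any $v_1,\dots,v_r$ with $\dim\langle v_1^0,\dots,v_r^0\rangle=r$ can be extended (when $r<k$ and, in the noncatastrophic case, always) by some $v_{r+1}$ with $v_{r+1}^0$ independent from the others, and $\supp\{v_1G_j^c,\dots,v_rG_j^c\}\subseteq\supp\{v_1G_j^c,\dots,v_{r+1}G_j^c\}$; strictness in the noncatastrophic case comes from the fact that adding an independent $v_{r+1}^0$ forces $v_{r+1}G_j^c\neq 0$ (since noncatastrophic means $G_0$ has rank $k$, so $v_{r+1}^0\neq 0$ implies $v_{r+1}G_j^c$ has a nonzero first block), hence the support strictly grows after possibly choosing $v_{r+1}$ to realize a new coordinate — here one argues that among all completions one can always pick up at least one extra support coordinate.

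Item 4 I would prove by relating $\Cc(j)$-type data for consecutive truncations: given $v_1,\dots,v_r\in\F_q^{k(j+2)}$ realizing $\dd_{j+1}^r(\Cc)$, truncating each to its first $k(j+1)$ coordinates gives vectors in $\F_q^{k(j+1)}$ with the same $v_i^0$, hence still satisfying the dimension condition, and the support of $\{v_iG_j^c\}$ is contained in the support of $\{v_iG_{j+1}^c\}$ restricted to the first $n(j+1)$ coordinates, so $\dd_j^r(\Cc)\leq\dd_{j+1}^r(\Cc)$; care is needed to phrase this via the "for all $v\in V\setminus\{0\}$ there is $i$ with $p_i(0)\neq 0$" description so that the truncation of the whole linear space $V$ still satisfies the condition. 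Item 3 then follows from items 2 and 4 by a limiting argument: $\dd^r(\Cc)=\lim_j\dd_j^r(\Cc)$ and $\dd^{r+1}(\Cc)=\lim_j\dd_j^{r+1}(\Cc)$, and one needs the limit to be \emph{strict}; for this I would argue that for $j$ large (at least $\delta_1$) the space $V$ realizing $\dd_j^{r+1}$ contains a subspace of dimension $r$ whose truncated support is strictly smaller — essentially a catastrophic-robust version of the strictness in item 2, using Proposition~\ref{proposition:coldist>zero} to ensure the extra generator contributes a genuinely new coordinate eventually. Item 6, in the noncatastrophic case, is the standard dimension count: the map $V\mapsto V_{[0,j]}$ is injective on the relevant spaces, $V_{[0,j]}$ lives in $\F_q^{n(j+1)}$, has $\F_q$-dimension $r$, and a generic/parity-check argument (as in \cite{CFN17}) bounds its minimum support by $(j+1)(n-k)+r$ via a Singleton-type estimate on the truncated code. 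Finally item 7: the inequality $\djr(\Cc)\leq\dd^r(\Cc)$ is immediate from monotonicity in $j$ (item 4) and the definition of the limit; the bound $\dd^r(\Cc)\leq n(\delta_1+1)$ I would get by exhibiting, for each $j$, an explicit $r$-dimensional space $V\subseteq\Cc$ with $|\supp(V_{[0,j]})|\leq n(\delta_1+1)$ — take $V$ spanned by $r$ of the row reduced basis elements $c_{i_1},\dots,c_{i_r}$ of smallest degree (each of degree $\le\delta_1$), scaled by suitable powers so their supports all sit inside a window of length $\delta_1+1$; then $\supp(V_{[0,j]})$ is contained in that window of size $n(\delta_1+1)$ for every $j$, and one checks the "$p_i(0)\neq 0$" condition holds for the chosen generators.

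The main obstacle I expect is item 3 — getting \emph{strict} inequality for the $r$-generalized column distance in the possibly catastrophic case. The strictness in item 2 can fail for catastrophic codes at small $j$ (the appended $v_{r+1}G_j^c$ could be zero), so the argument must be genuinely asymptotic: one has to show that for all sufficiently large $j$, \emph{every} realizing $(r+1)$-space $V$ for $\dd_j^{r+1}(\Cc)$ has an $r$-dimensional subspace whose truncated support is strictly smaller, which requires controlling how the extra dimension of $V$ forces a new nonzero coordinate to appear in the truncation once $j$ exceeds a threshold governed by the degrees of a row reduced basis and $\delta_1$. This is where Proposition~\ref{proposition:coldist>zero} and the structure of row reduced bases do the real work; the other items are either definitional or routine dimension/support bookkeeping.
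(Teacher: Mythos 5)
Your treatment of items 1, 4, 5, 6 and of the well-definedness of $\dd^r$ matches the paper's, which disposes of these as definitional or by citing \cite{CFN17}, and your bound in item 7 is essentially the paper's: one takes the first $k$ canonical basis vectors $e_1,\dots,e_k$ of $\F_q^{k(j+1)}$, so that $\supp\{e_1G_j^c,\dots,e_kG_j^c\}$ is the joint support of the truncated rows of a row reduced $G$ and has size at most $n(\delta_1+1)$, uniformly in $j$. One caveat there: do not ``scale by suitable powers'' of $x$. The rows of a row reduced generator matrix already have degree at most $\delta_1$, so no shifting is needed, and multiplying a generator $c_{i_1}$ by $x^m$ with $m>0$ would destroy admissibility, since $x^mc_{i_1}$ would then be a nonzero element of your space $V$ all of whose coordinate polynomials $p_i$ vanish at $0$.

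The genuine gap is in items 2 and 3, where your argument runs in the wrong direction. To prove $\dd_j^r(\Cc)\leq\dd_j^{r+1}(\Cc)$ you must start from a family realizing $\dd_j^{r+1}(\Cc)$ and produce an admissible $r$-family with no larger support; extending a realizing $r$-family to an $(r+1)$-family only bounds $\dd_j^{r+1}(\Cc)$ from above, which proves nothing. The same reversal undermines your strictness argument: showing that \emph{some} completion ``picks up a new coordinate'' does not bound the minimum over \emph{all} $(r+1)$-families from below. The mechanism your plan is missing (and which the paper uses for item 3) is a one-step Gaussian elimination. Choose $j\geq\max\{\bar j,\hat j\}$, where $\bar j$ is past the point at which $\dd_j^{r-1}(\Cc)$ and $\dd_j^{r}(\Cc)$ have stabilized and $\hat j$ is given by Proposition~\ref{proposition:coldist>zero}, so that every $v$ with $v^0\neq 0$ satisfies $vG_j^c\neq 0$. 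If $v_1,\dots,v_r$ realize $\dd_j^r(\Cc)$, pick a coordinate $\ell$ in $\supp(v_1G_j^c)$ and choose $\alpha_2,\dots,\alpha_r$ so that each $(v_i-\alpha_i v_1)G_j^c$ vanishes at $\ell$; then $v_2-\alpha_2v_1,\dots,v_r-\alpha_rv_1$ is an admissible $(r-1)$-family whose joint support is contained in $\supp\{v_1G_j^c,\dots,v_rG_j^c\}\setminus\{\ell\}$, whence $\dd^{r-1}(\Cc)=\dd_j^{r-1}(\Cc)<\dd_j^{r}(\Cc)=\dd^{r}(\Cc)$. No control of ``new coordinates appearing'' or of the degrees of a row reduced basis is needed; Proposition~\ref{proposition:coldist>zero} enters only to guarantee $v_iG_j^c\neq 0$, and the same elimination (with nonvanishing guaranteed directly by $\rk(G_0)=k$) gives the strict part of item 2 in the noncatastrophic case.
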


\begin{proof}
Items 1, 4, 5 and the first part of item 2 follow directly from the definition. The noncatastrophic case of item 2 is shown in~\cite[Theorem 1]{CFN17}. For item 6, see~\cite[Proposition 1]{CFN17}. To prove item 7, it suffices to compute $\lvert\supp\{e_1G_j^c,\dots,e_kG_j^c\}\rvert$, where $e_i\in\F_q^{k(j+1)}$ is the $i$-th vector of the canonical basis.
In order to prove item 3, first notice that items 4 and 7 imply that $r$-generalized column distances are well-defined. Indeed, the limit always exists since the $(r,j)$-generalized column distances are non-decreasing in $j$, and it is finite by item 7. Moreover,
by item 7 there exists a $\bar j\in\N$ such that for all $m\in\N$ $\dd_{\bar j}^r(\Cc)=\dd_{\bar j+m}^r(\Cc)=\dd^r(\Cc)$. Fix $r>1$. Up to increasing $\bar j$, we may assume that $\dd_{\bar j}^{r-1}(\Cc)=\dd^{r-1}(\Cc)$. By Proposition~\ref{proposition:coldist>zero} there exist $\hat j$ such that $\dd_{\hat j}^c(\Cc)>0$. Let $j\geq\max\{\bar j,\hat j\}$ and suppose that $v_1,\dots,v_r$ realize $\djr(\Cc)$ with respect to $G$, i.e., $\djr(\Cc)=\lvert\supp\{v_1G_j^c,\dots,v_rG_j^c\}\rvert$ and $\dim\left(\langle v_1^0,\dots,v_r^0\rangle_{\F_q}\right)=r$. Since $j\geq \hat j$, we have that $v_iG_j^c\neq0$ for all $1\leq i\leq r$. Hence there exist $\alpha_2,\dots,\alpha_r\in\F_q$ such that $\djr(\Cc)>\lvert\supp\{(v_2-\alpha_2v_1)G_j^c,\dots,(v_r-\alpha_r v_1)G_j^c\}\rvert$ with $\dim\left(\langle (v_2-\alpha_2 v_1)^0,\dots,(v_r-\alpha_r v_1)^0\rangle_{\F_q}\right)=r-1$. We conclude that $\dd^r(\Cc)\geq\djr(\Cc)>\dd_{j}^{r-1}(\Cc)=\dd^{r-1}(\Cc)$.
\end{proof}

Notice in particular that, while $(r,j)$-generalized columns distances are only weakly increasing in $r$, $r$-generalized columns distances are strictly increasing in $r$.

\begin{example}
\begin{itemize}
\item [(a)] The $(r,j)$-generalized column distances may not be strictly increasing in $r$ for a fixed $j$, if the code is catastrophic. For instance, the code $\Cc=\langle (x,0),(0,x)\rangle$ has $\dd_0^1(\Cc)=\dd_0^2(\Cc)=0$. This is coherent with item 2 of Proposition~\ref{proposition:basicproperties}. On the other side, item 3 of Proposition~\ref{proposition:basicproperties} implies that, for $j$ large enough, the $(r,j)$-generalized column distances are strictly increasing with $r$, also in the catastrophic case. For example, the same code $\Cc$ has $\dd_j^1(\Cc)=1$ and $\dd_j^2(\Cc)=2$ for all $j\geq1$.
\item [(b)] The bound in item 6 of Proposition~\ref{proposition:basicproperties} may not hold for catastrophic codes. For instance, let $q$ be a prime and let $n<q$. Let $\Cc=\langle(1,1,\dots,1),(x,2x,\dots,nx)\rangle_{\F_q[x]}\subseteq\F_q[x]^n$. Then
$$\dd_1^2(\Cc)=2n-1> 2(n-2)+2=(j+1)(n-k)+r.$$
\end{itemize}
\end{example}

\begin{remark}
Specializing item 6 in Proposition~\ref{proposition:basicproperties} to $r=1$ and using item 1, we obtain the classical bound for column distances 
$$\dd_j^c(\Cc)\leq(n-k)(j+1) +1,$$
that was originally proved in~\cite{LRS}. Moreover, the bound is achieved for several triples of parameters. Indeed, if $\Cc$ is a noncatastrophic $(n,k,\delta)$ MDP code we immediately obtain by the definition and by item 3 that 
$\dd_j^r(\Cc)=(j+1)(n-k)+r$ for $0\leq j\leq \left\lfloor \frac{\delta}{k}\right\rfloor+\left\lfloor \frac{\delta}{n-k}\right\rfloor$.
\end{remark}

It has been already noticed in~\cite{Glu} that column distances are preserved neither under isometries nor under strong isometries. Item 1 of Proposition~\ref{proposition:basicproperties} therefore implies that $(r,j)$-generalized column distances cannot be invariant under isometries either. However, in the next proposition we show that they are invariant under $j$-isometries. Later in this section, we prove that $r$-generalized column distances are invariant under isometries. 

\begin{proposition}\label{prop:invarjeq}
Let $\phi:\Cc_1\rightarrow \Cc_2$ be a $j'$-equivalence. Then, $\djr(\Cc_1)=\djr(\Cc_2)$ for $1\leq r\leq k$ and $0\leq j\leq j'$.
\end{proposition}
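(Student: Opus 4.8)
The plan is to fix a $j$-equivalence $\phi:\Cc_1\to\Cc_2$ with $0\leq j\leq j'$ (by Lemma~\ref{lemma:decreasingequivalence}, $\phi$ is a $j$-equivalence for every such $j$) and use the generator-matrix-free description of $\djr$ given right after the definition. That is, $\djr(\Cc)$ equals the minimum of $|\supp(V_{[0,j]})|$ over $\F_q$-linear subspaces $V\subseteq\Cc$ with $\dim(V)=r$ such that every nonzero $v=p_1c_1+\dots+p_kc_k\in V$ has $p_i(0)\neq 0$ for some $i$ (here $c_1,\dots,c_k$ is a fixed basis of the code). I would establish the inequality $\djr(\Cc_2)\leq\djr(\Cc_1)$ and then invoke symmetry, since by item~1 of Proposition~\ref{proposition:propertiesjequiv} the inverse $\phi^{-1}$ is again a $j$-equivalence.

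First I would take a subspace $V\subseteq\Cc_1$ of dimension $r$ realizing $\djr(\Cc_1)$ in the sense above, and set $W=\phi(V)\subseteq\Cc_2$. Since $\phi$ is an $\F_q[x]$-isomorphism, it restricts to an $\F_q$-linear isomorphism $V\to W$, so $\dim_{\F_q}(W)=r$. The next step is to verify that $W$ satisfies the "non-vanishing at $0$" condition required in the description of $\djr(\Cc_2)$: I would argue exactly as in the proof that $\djr$ is independent of the generator matrix — if $\phi(c_1),\dots,\phi(c_k)$ is the corresponding basis of $\Cc_2$ and $w=\phi(v)$ with $v=\sum p_ic_i$, then writing $\phi(c_i)=\sum_s u_{s,i}\phi'$-type relations (or more directly, $w=\sum p_i\phi(c_i)$), the coefficient polynomials of $w$ in the basis $\phi(c_1),\dots,\phi(c_k)$ are precisely $p_1,\dots,p_k$, so some $p_i(0)\neq 0$ by the hypothesis on $V$. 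Hence $W$ is an admissible competitor in the minimum defining $\djr(\Cc_2)$.

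It remains to show $|\supp(W_{[0,j]})|\leq|\supp(V_{[0,j]})|$, and in fact equality. Here I would use Lemma~\ref{lemma:decreasingequivalence}(2): $\phi$ induces a Hamming-weight equivalence $(\Cc_1)_{[0,j]}\to(\Cc_2)_{[0,j]}$, and more precisely $\wt(c_{[0,j]})=\wt(\phi(c)_{[0,j]})$ for all $c$. The support of a space is the union of the supports of its elements, so $|\supp(V_{[0,j]})|=\max_{v\in V}\wt(v_{[0,j]})$ is not quite right — rather I should observe that $\supp(V_{[0,j]})=\bigcup_{v\in V}\supp(v_{[0,j]})$ and use that, since $\phi$ preserves the weight of every truncated element and is a bijection $V\to W$, the multiset of weights $\{\wt(v_{[0,j]}):v\in V\}$ equals $\{\wt(w_{[0,j]}):w\in W\}$. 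To pass from element-wise weight equality to support-size equality, the clean route is Theorem~\ref{theorem:characjequiv}: there exist a permutation matrix $P$ and an invertible diagonal matrix $D$ over $\F_q$ with $\phi(c)_{[0,j]}=c_{[0,j]}PD$ for all $c\in\Cc_1$. Then $W_{[0,j]}=V_{[0,j]}PD$ as $\F_q$-subspaces of $\F_q^{n(j+1)}$, and since $PD$ merely permutes coordinates and rescales them by units, $|\supp(W_{[0,j]})|=|\supp(V_{[0,j]})|$ exactly. This gives $\djr(\Cc_2)\leq\djr(\Cc_1)$; applying the same argument to $\phi^{-1}$ yields the reverse inequality and hence equality.

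The main obstacle I anticipate is the bookkeeping around the "non-vanishing at $0$" admissibility condition: one must be careful that $\phi$ genuinely sends an admissible subspace of $\Cc_1$ to an admissible subspace of $\Cc_2$ and not merely a subspace of the right dimension. Once Theorem~\ref{theorem:characjequiv} is in hand the support count is immediate, so I would lean on that theorem rather than reprove the coordinate-wise structure of $\phi$ on truncations; the only real content beyond citing it is checking that the minimum in $\djr(\Cc_1)$ is actually attained (which follows since it is a minimum over a finite, nonempty set once we know $\djr$ is finite, e.g. by item~7 of Proposition~\ref{proposition:basicproperties} in the noncatastrophic case, or by a direct argument in general) and that the competitor sets on both sides correspond under $\phi$.
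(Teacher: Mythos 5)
Your proof is correct and follows essentially the same route as the paper: both identify a realizer of $\djr(\Cc_1)$, invoke Theorem~\ref{theorem:characjequiv} to see that $\phi$ acts on $j$-truncations by a monomial matrix $PD$ (hence preserves support sizes), check that the image is an admissible competitor for $\djr(\Cc_2)$, and obtain the reverse inequality by applying the argument to $\phi^{-1}$. The only cosmetic difference is that you work with the subspace reformulation of $\djr$ while the paper works directly with the sliding generator matrix $\phi(G)_j^c=G_j^c\,\mathrm{diag}(PD,\dots,PD)$; you also rightly discard the false identity $\lvert\supp(V_{[0,j]})\rvert=\max_{v\in V}\wt(v_{[0,j]})$ before it causes trouble.
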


\begin{proof}
Let $G$ be a generator matrix of $\Cc_1$ and suppose that $v_1,\dots,v_r\in\F_q^{k(j+1)}$ realize the $(r,j)$-generalized column distance of $\Cc$ with respect to $G$. Since $\phi$ is a $j'$-equivalence, by Theorem~\ref{theorem:characjequiv} there exist a permutation matrix $P\in\mathrm{GL}_n(\F_q)$ and a diagonal matrix $D=\mathrm{diag}(a_1,\dots,a_n)$ with $a_1\dots,a_n\in\F_q^*$ such that $\phi(G)_i=G_iPD$ for all $i\leq j'$. In particular,
\begin{equation*}
    \phi(G)_j^c=\begin{pmatrix}
G_0PD&G_1PD&\dots& G_jPD\\
0&G_0PD&\dots& G_{j-1}PD\\
\vdots&\ddots&&\vdots\\
0&\dots&0&G_0PD
\end{pmatrix}.
\end{equation*}
Hence
$$\djr(\Cc_2)\leq\lvert \supp\{v_1\phi(G)_j^c,\dots,v_r\phi(G)_j^c\}\rvert=\lvert \supp\{v_1G_j^c,\dots,v_rG_j^c\}\rvert=\djr(\Cc_1).$$
The reverse inequality follows by looking at $\phi^{-1}:\Cc_2\rightarrow\Cc_1$, which is a $j$-equivalence by item 1 in Propositon~\ref{proposition:propertiesjequiv}. 
\end{proof}

Since $\dd^r(\Cc)=\lim_{j\to\infty}\djr(\Cc)$, the previous proposition implies that $r$-generalized column distances are preserved by equivalences.
\begin{corollary}\label{corollary:equivinv}
    Let $\phi:\Cc_1\rightarrow \Cc_2$ be an equivalence. Then, $\dd^r(\Cc_1)=\dd^r(\Cc_2)$ for $1\leq r\leq k$. 
\end{corollary}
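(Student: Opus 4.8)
The plan is to deduce Corollary~\ref{corollary:equivinv} directly from Proposition~\ref{prop:invarjeq} by passing to the limit in $j$. Concretely, if $\phi:\Cc_1\rightarrow\Cc_2$ is an equivalence, then by definition it is a $j'$-equivalence for every $j'\in\N_0$. Fixing $r$ with $1\leq r\leq k$ and an arbitrary $j\geq 0$, I would apply Proposition~\ref{prop:invarjeq} with the choice $j'=j$ (which is legitimate since $\phi$ is in particular a $j$-equivalence), obtaining $\djr(\Cc_1)=\djr(\Cc_2)$ for this $j$. Since $j$ was arbitrary, the two sequences $(\djr(\Cc_1))_{j\geq0}$ and $(\djr(\Cc_2))_{j\geq0}$ coincide term by term.

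Next I would take the limit as $j\to\infty$ on both sides. By item~7 (together with items~4) of Proposition~\ref{proposition:basicproperties}, the sequence $\djr(\Cc_i)$ is weakly increasing in $j$ and bounded above by $n(\delta_1+1)$, so the limit $\dd^r(\Cc_i)=\lim_{j\to\infty}\djr(\Cc_i)$ exists for $i=1,2$; this is precisely the well-definedness already invoked in the definition. Since the sequences agree for every $j$, their limits agree, i.e. $\dd^r(\Cc_1)=\dd^r(\Cc_2)$. As $r$ was an arbitrary integer in $\{1,\dots,k\}$, this establishes the claim. (Note that $\Cc_1$ and $\Cc_2$ have the same rank $k$ because an equivalence is in particular an $\F_q[x]$-isomorphism, so the range $1\leq r\leq k$ is the same for both codes.)

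There is essentially no obstacle here: the corollary is a formal consequence of the proposition once one observes that an equivalence is a $j$-equivalence for all $j$ and that the $r$-generalized column distance is by definition the $j\to\infty$ limit of the $(r,j)$-generalized column distances. The only point worth stating carefully — and the one I would make explicit in the write-up — is that the limit is being taken of two sequences that are already known to converge (by Proposition~\ref{proposition:basicproperties}, items 4 and 7), so equality of the sequences termwise immediately yields equality of the limits. I would write the proof in two or three sentences along exactly these lines.

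\begin{proof}
Let $1\leq r\leq k$. Since $\phi$ is an equivalence, it is a $j$-equivalence for every $j\geq 0$, so by Proposition~\ref{prop:invarjeq} (applied with $j'=j$) we have $\djr(\Cc_1)=\djr(\Cc_2)$ for all $j\geq 0$. By items 4 and 7 of Proposition~\ref{proposition:basicproperties}, both sequences $(\djr(\Cc_i))_{j\geq0}$ are weakly increasing and bounded above, hence convergent. Taking the limit as $j\to\infty$ on both sides of the equality $\djr(\Cc_1)=\djr(\Cc_2)$ gives
$$\dd^r(\Cc_1)=\lim_{j\to\infty}\djr(\Cc_1)=\lim_{j\to\infty}\djr(\Cc_2)=\dd^r(\Cc_2),$$
which proves the claim.
\end{proof}
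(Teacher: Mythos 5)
Your proof is correct and follows exactly the paper's reasoning: the paper dispenses with the corollary by the one-line observation that $\dd^r(\Cc)=\lim_{j\to\infty}\djr(\Cc)$ together with Proposition~\ref{prop:invarjeq}, which is precisely your argument spelled out in slightly more detail. Nothing to add.
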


We now extend the result of Corollary~\ref{corollary:equivinv} to all isometries.

\begin{theorem}\label{theorem:isominv}
Let $\phi:\Cc_1\rightarrow \Cc_2$ be an isometry. Then, $\dd^r(\Cc_1)=\dd^r(\Cc_2)$ for $1\leq r\leq k$.
\end{theorem}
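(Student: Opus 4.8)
The plan is to reduce the statement about a general isometry to the case of an equivalence, which is already handled by Corollary~\ref{corollary:equivinv}. By Theorem~\ref{theorem:isometric}, any isometry $\phi:\Cc_1\to\Cc_2$ has the form $\phi(c)=cPD$ with $P$ a permutation matrix and $D=\mathrm{diag}(a_1x^{m_1},\dots,a_nx^{m_n})$, $a_\ell\in\F_q^*$, $m_\ell\in\Z$. The permutation and the nonzero scalars $a_\ell$ already define an equivalence, so the only obstruction to $\phi$ being an equivalence is the presence of the shift exponents $m_\ell$. The idea is to factor out those shifts one coordinate at a time and show that each such elementary shift leaves all $r$-generalized column distances unchanged.

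Concretely, I would first observe that it suffices to treat the isometry $\psi:\Cc\to\Cc'$ given by $\psi(c)=cD_\ell$ where $D_\ell=\mathrm{diag}(1,\dots,1,x^{m},1,\dots,1)$ has a single nontrivial entry $x^m$ in position $\ell$ (with $m$ possibly negative, provided $\psi$ still lands in $\F_q[x]^n$, which is automatic since it is given to be an isometry of convolutional codes). Indeed, a general isometry is a composition of a single equivalence (handling $P$ and all the $a_\ell$) with finitely many such one-coordinate shifts, and $r$-generalized column distances compose well under composition of maps that each preserve them. By symmetry (replacing $\psi$ by $\psi^{-1}$), I may assume $m\geq 0$, and by induction on $m$ it is enough to treat $m=1$, i.e.\ multiplication of the $\ell$-th coordinate by $x$.

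For the case $m=1$ the key point is a comparison of truncated sliding generator matrices. If $G=\sum G_ix^i$ is a generator matrix of $\Cc$, then $\psi(G)=GD_\ell$ has matrix coefficients $\psi(G)_i=G_iD_\ell'$ where... more precisely, writing $e_\ell$ for the coordinate functional, multiplication of column $\ell$ by $x$ sends the block $G_i$ to a block obtained by shifting the $\ell$-th column down by one level in the sliding matrix. The plan is to show that for every $j$ there is a column distance bound relating $\dd_j^r(\Cc)$ and $\dd_{j'}^r(\psi(\Cc))$ for suitable $j,j'$ differing by a bounded amount, so that the limits as $j\to\infty$ agree. Intuitively, shifting one coordinate by $x$ can move at most one nonzero symbol from truncation level $j$ into level $j+1$ per codeword, and the support count of an $r$-dimensional space of truncations is affected by a bounded amount that washes out in the limit; one inequality comes from truncating $\psi(\Cc)$ at level $j+1$ and comparing with $\Cc$ at level $j$, the other symmetrically using $\psi^{-1}$.

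The main obstacle I expect is making the support-comparison between $\Cc(j)$-type truncations and $\psi(\Cc)(j')$-type truncations precise when the code is catastrophic, since then the truncated ``codes'' $\Cc(j)$ need not be vector spaces and the constraint $\dim\langle v_1^0,\dots,v_r^0\rangle_{\F_q}=r$ interacts subtly with the coordinate shift (multiplying a coordinate by $x$ can change which polynomial coefficients vanish at $0$). A clean way around this may be to avoid the explicit matrix bookkeeping altogether: instead, show directly that the map $\psi$, or a slight modification of it, becomes a $j$-equivalence after composing with a suitable change of generator matrix, and then invoke Proposition~\ref{prop:invarjeq} together with the fact that $\dd^r(\Cc)=\lim_j\dd_j^r(\Cc)$ stabilizes (Proposition~\ref{proposition:basicproperties}, items~4 and~7). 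In other words, one expects that although $\psi$ itself need not be any fixed $j$-equivalence, for every target level $j_0$ one can find an isometry $\psi_{j_0}$ with $\dd^r(\psi_{j_0}(\Cc))=\dd^r(\psi(\Cc))$ which \emph{is} a $j_0$-equivalence, and letting $j_0\to\infty$ forces $\dd^r(\Cc_1)=\dd^r(\Cc_2)$.
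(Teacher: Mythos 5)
Your primary route is essentially the paper's argument, just packaged differently: the paper writes $\phi(c)=cPD$ with $D=\mathrm{diag}(a_1x^{m_1},\dots,a_nx^{m_n})$, sets $m=\max\{0,-m_i\}$, and proves in one step that if $p_1,\dots,p_r$ realize $\dd_{j+m}^r(\Cc_1)$ with respect to $G$, then
$\djr(\Cc_2)\leq\lvert\supp\{(p_1GPD)_{[0,j]},\dots,(p_rGPD)_{[0,j]}\}\rvert\leq\dd_{j+m}^r(\Cc_1)$,
because a monomial of degree $t$ in the $i$-th column of $GP$ becomes one of degree $t+m_i\geq t-m$ in $GPD$; letting $j\to\infty$ and repeating with $\phi^{-1}$ finishes the proof. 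Your coordinate-by-coordinate decomposition into elementary shifts, with induction on $m$, carries out exactly this comparison one column at a time; it works, but buys nothing over doing all coordinates simultaneously.

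The ``main obstacle'' you anticipate is not actually there, and recognizing this is the crux. The rank condition in the definition of $\djr$ is $\dim\langle v_1^0,\dots,v_r^0\rangle_{\F_q}=r$, a condition on the \emph{message} blocks $v_i^0\in\F_q^k$ (equivalently on $p_i(0)$), not on the codeword coefficients. Since $\djr$ is independent of the choice of generator matrix, you may compute $\djr(\Cc_2)$ with respect to the generator matrix $GPD$ and feed it the very same vectors $v_1,\dots,v_r$ that realize $\dd_{j+m}^r(\Cc_1)$ with respect to $G$; the constraint is verified verbatim, catastrophic or not. Because this worry is unfounded, you should drop your fallback strategy: a single-coordinate shift by $x$ is in general not a $j$-equivalence for any $j$ (e.g.\ $\langle(1,1)\rangle\to\langle(1,x)\rangle$ fails already at $j=0$), and the auxiliary isometries $\psi_{j_0}$ you postulate are never constructed, so that route as stated has a genuine gap. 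Stick with the support comparison and the limit argument.
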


\begin{proof}
By Theorem~\ref{theorem:isometric} there exist a permutation matrix $P\in\mathrm{GL}_n(\F_q)$ and a diagonal matrix
$D = \mathrm{diag}(a_1x^{m_1},\dots,a_nx^{m_n})$ where
$a_1,\dots,a_n\in\F_q^{*}$ and $m_1,\dots,m_n\in\Z$ such that $\phi(c)=cPD$ for all $c\in\Cc$. Let $m=\max\{0,-m_i : 1\leq i\leq n\}$.
Let $G$ be a generator matrix of $\Cc_1$ and suppose that $p_1,\dots,p_r\in\F_q[x]^{k}$ realize the $(r,j+m)$-generalized column distance of $\Cc$ with respect to $G$, i.e., $\dim(\langle p_1(0),\ldots,p_r(0)\rangle_{\F_q}=r$ and $d_{j+m}^r(\Cc_1)=|\supp\{(p_1G)_{[0,j+m]},\ldots,(p_rG)_{[0,j+m]}\}|$. Then
\begin{equation}\label{eq:djr}
    \begin{split}
        \djr(\Cc_2)&\leq\left\lvert\supp\left\{(p_1\phi(G))_{[0,j]},\dots,(p_r\phi(G))_{[0,j]}\right\}\right\rvert=\left\lvert\supp\left\{(p_1GPD)_{[0,j]},\dots,(p_rGPD)_{[0,j]}\right\}\right\rvert\\
        &\leq\left\lvert\supp\left\{(p_1G)_{[0,j+m]},\dots,(p_rG)_{[0,j+m]}\right\}\right\rvert=\dd_{j+m}^r(\Cc_1).
    \end{split}
\end{equation}
In fact, the columns of $GPD$ are equal to those of $G$ up to permutation and multiplying by a constant and a power of $x$ with exponent smaller than or equal to $m$. Since none of these operations affects supports and a monomial of degree $t$ in the $i$-th column of $GP$ corresponds to a monomial of degree $t+m_i$ in the $i$-th column of $GPD$, then $(pG)_{[0,j+m]}$ contains all the monomials that appear in $(pGPD)_{[0,j]}$ (and possibly more). This proves the inequality in (\ref{eq:djr}).
As $j$ goes to infinity, we obtain
\begin{equation*}
    \dd^r(\Cc_2)=\lim_{j\to\infty}\djr(\Cc_2)\leq\lim_{j\to\infty}\dd_{j+m}^r(\Cc_1)=\dd^r(\Cc_1).
\end{equation*}
The reverse inequality follows by considering $\phi^{-1}$ instead of $\phi$. We conclude that the $r$-generalized column distances are invariant under isometries.
\end{proof}

It follows from Proposition~\ref{proposition:basicproperties} that the sequence $(\djr(\Cc))_{j\in\N}$ stabilizes after some $\bar j$. In the last part of this section, we give an upper bound on $\bar j$. 

\begin{theorem}\label{theorem:boundstabilze}
Let $\Cc$ be an $(n,k,\delta)$ convolutional code with memory $\delta_1$. If $\bar j=\min\{j:\djr(\Cc)=\dd^r(\Cc)\}$, then 
$$\bar j< [n(\delta_1+1)+1]q^{\delta_1 kr}.$$
\end{theorem}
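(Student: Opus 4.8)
The plan is to show that the sequence $(\djr(\Cc))_{j\in\N}$ cannot increase for more than the claimed number of steps, by exhibiting a pigeonhole obstruction analogous to the one used in Proposition~\ref{proposition:coldist>zero}. Recall from item~7 of Proposition~\ref{proposition:basicproperties} that $\dd^r(\Cc)\leq n(\delta_1+1)$, so the nondecreasing integer sequence $(\djr(\Cc))_j$ can strictly increase at most $n(\delta_1+1)$ times before reaching its limit $\dd^r(\Cc)$. Hence it suffices to bound the length of any interval of consecutive indices $j$ on which $\djr(\Cc)$ is constant but has not yet stabilized: if every such constant run has length at most $L$, then $\bar j < L\cdot[n(\delta_1+1)+1]$, and one checks that $L\leq q^{\delta_1 kr}$ gives the stated inequality.

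First I would fix $j$ with $\djr(\Cc)=\dd_{j+1}^r(\Cc)=\cdots=\dd_{j+\ell}^r(\Cc)<\dd^r(\Cc)$ and, for each index $i$ in this run, pick polynomial vectors $p_1^{(i)},\dots,p_r^{(i)}\in\F_q[x]^k$ (via the generator-matrix reformulation of $\djr$ given right after the definition) that realize $\dd_i^r(\Cc)$, so $\dim\langle p_1^{(i)}(0),\dots,p_r^{(i)}(0)\rangle_{\F_q}=r$ and $|\supp\{(p_1^{(i)}G)_{[0,i]},\dots,(p_r^{(i)}G)_{[0,i]}\}|=\djr(\Cc)$. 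The key structural fact to extract is that, since $\dd_{i}^r=\dd_{i+1}^r$, the truncation $(p_\ell^{(i)}G)_{[0,i]}$ carries no new support in degrees close to the top; more precisely, one should argue that the "tail data" of such a realizing tuple, recorded in the last $\delta_1$ coefficient blocks (each a matrix in $\F_q^{\delta_1 k\times\ldots}$, or rather the relevant $\F_q$-datum of size at most $\delta_1 kr$), determines whether the run can be continued. Concatenating as in Proposition~\ref{proposition:coldist>zero}, if the run were longer than the number $q^{\delta_1 kr}$ of possible tail configurations, two indices $i_1<i_2$ in the run would produce the same tail, and shifting the first tuple by $x^{i_2-i_1}$ and splicing would yield a realizing tuple for $\dd_{j}^r$ whose support is strictly smaller in the catastrophic-correction sense — contradicting minimality, or else producing a genuine code element witnessing that $\djr$ was already equal to $\dd^r$, contrary to assumption.

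The main obstacle is the second step: making precise which finite datum of size $q^{\delta_1 kr}$ the pigeonhole is applied to, and verifying that two coinciding data really do let one shift-and-splice the realizing tuples without destroying either the rank condition $\dim\langle \cdot(0)\rangle = r$ or the support count. In Proposition~\ref{proposition:coldist>zero} this worked because a single vanishing codeword could be shifted freely; here one must carry along $r$ vectors simultaneously and control the union of their supports, so the bookkeeping — which combination of the ${G_j^c}'$-type extended sliding matrices to use, and why the spliced tuple still lies in the admissible set — is where the real work lies. I expect the rank-$r$ condition at $x=0$ to be the subtle point, since it is a property of the \emph{bottom} coefficients while the pigeonhole is applied to the \emph{top} ones; one likely keeps the bottom block of one of the two tuples fixed throughout the splice, so that $p_\ell(0)$ is unchanged, and uses the coincidence of top data only to guarantee that the support does not grow. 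Once that is arranged, combining the run-length bound $q^{\delta_1 kr}$ with the at-most-$n(\delta_1+1)$ strict increases gives $\bar j<[n(\delta_1+1)+1]q^{\delta_1 kr}$, as claimed.
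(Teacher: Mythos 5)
Your high-level skeleton matches the paper's: the bound $\dd^r(\Cc)\leq n(\delta_1+1)$ from item 7 of Proposition~\ref{proposition:basicproperties} caps the number of strict increases, a pigeonhole over $q^{\delta_1 kr}$ tail configurations caps how long one must wait for the next increase, and multiplying the two gives the stated inequality. Your guess that the relevant finite datum has size $\delta_1 kr$ over $\F_q$ is also right: it is the last $\delta_1$ input blocks $(v_i^{j-\delta_1+1},\dots,v_i^{j})_{i=1,\dots,r}$, equivalently the overflow of $v_i{G_j^c}'$ in degrees $j+1,\dots,j+\delta_1$. But the step you flag as ``where the real work lies'' is genuinely missing, and the setup you propose does not support it. If you choose an independent realizing tuple $p^{(i)}$ at each level $i$ of a constant run, a coincidence of tail data between levels $i_1<i_2$ gives you no control over the support of any spliced object: equality of the two support \emph{cardinalities} $\dd_{i_1}^r(\Cc)=\dd_{i_2}^r(\Cc)$ does not mean the supports occupy the same positions, and the prefixes of $p^{(i_1)}$ and $p^{(i_2)}$ are unrelated, so grafting the continuation of one onto the other can enlarge the support arbitrarily.

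The missing idea is to work with truncations of a \emph{single} tuple. The paper fixes $v_1,\dots,v_r$ realizing $\dd_{\bar j}^r(\Cc)$ and, for $\delta_1\leq j\leq\bar j$, sets $D_j=\langle v_1'{G_j^c}',\dots,v_r'{G_j^c}'\rangle_{\F_q}$, where $v_i'$ is the truncation of $v_i$ to its first $(j+1)k$ coordinates; the nondecreasing quantity that must grow in each window of length $q^{\delta_1 kr}$ is $\lvert\supp (D_j)_{[0,j]}\rvert$, not $\djr(\Cc)$ itself. With a single tuple, ``no growth between $j_1$ and $j_2$'' literally means the associated codewords vanish in degrees $j_1+1,\dots,j_2$, and coinciding overflows mean that periodically repeating the input blocks $v_i^{j_1+1},\dots,v_i^{j_2}$ yields admissible tuples at arbitrarily large $j$ with unchanged support, whence $\dd^r(\Cc)=\dd_{j_1}^r(\Cc)$, against the minimality of $\bar j$. (Your framing can be repaired along these lines: within a constant run of $\djr(\Cc)$, the truncations of the realizing tuple at the top of the run realize every lower level of the run, so one may pigeonhole on their overflows --- but this is exactly the single-tuple device.) Finally, the rank condition at $x=0$, which you single out as the subtle point, is in fact the easy part: the splice never touches the degree-$0$ blocks $v_i^0$, so $\dim\left(\langle v_1^0,\dots,v_r^0\rangle_{\F_q}\right)=r$ is preserved automatically.
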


\begin{proof}
If $\delta_1=0$ then the thesis holds, since $\dd^r(\Cc)=\dd_0^r(\Cc)$. Hence assume $\delta_1\geq 1$. Let $\bar j=\min\{j:\djr(\Cc)=\dd^r(\Cc)\}$ and let $v_1,\dots,v_r\in\F_q^{(\bar j+1)k}$ be vectors that realize $\dd_{\bar j}^r(\Cc)$ with respect to a row reduced generator matrix $G$ of $\Cc$. For each $\delta_1\leq j\leq \bar j$ let $\pi_j:\F_q^{(\bar j+1)k}\rightarrow \F_q^{( j+1)k}$ be the canonical projection on the first $(j+1)k$ entries. Define vector spaces $D_j$ as $$D_j=\langle \pi_j(v_1){G_j^c}',\dots,\pi_j(v_r){G_j^c}'\rangle_{\F_q}.$$ First of all notice that $(D_j)_{[0,j]}=(D_{\bar j})_{[0,j]}$, hence $\dd_{\bar j}^r(\Cc)\geq \left\lvert\supp(D_j)_{[0,j]}\right\rvert$. Consider now the sequence $\{(D_j)_{[j+1,j+\delta_1]}\}_{j\geq 0}$. Since there are at most $q^{\delta_1 kr}$ different $\F_q$-linear spaces in the sequence $\{(D_j)_{[j+1,j+\delta_1]}\}_{j\geq 0}$, for every $1\leq s\leq n(\delta_1+1)$ we can find two indices $j_1$ and $j_2$ such that
\begin{itemize}
    \item $\delta_1+(s-1)q^{\delta_1 kr}\leq j_1<j_2\leq \delta_1+sq^{\delta_1 kr}$
    \item $(D_{j_1})_{[{j_1}+1,j_1+\delta_1]}=(D_{j_2})_{[j_2+1,j_2+\delta_1]}$. 
\end{itemize}
We observe that if
$$\left\lvert\supp\left\{(D_{j_1})_{[0,{j_1}]}\right\}\right\rvert=\left\lvert\supp\left\{(D_{j_2})_{[0,{j_2}]}\right\}\right\rvert,$$
then $\dd^r(\Cc)=\dd_{j_1}^r(\Cc)$, contradicting the minimality of $\bar j$. Indeed, for $j=j_2+t(j_2-j_1)$ with $t\geq 0$, the vectors
$$t(v_i)=(v_i^0,\dots,v_i^{j_1},v_i^{j_1+1},\dots,v_i^{j_2},\underbrace{v_i^{j_1+1},\dots,v_i^{j_2},\dots,v_i^{j_1+1},\dots,v_i^{j_2}}_{\text{t times}})$$ realize $d_j^r(\Cc)$
and $\dd_{j_2+t(j_2-j_1)}^r(\Cc)=\dd_{j_1}^r(\Cc)$ for all $t$.
So the support must increase by at least $1$ every $q^{\delta_1 kr}$ steps. If $\bar j \geq [n(\delta_1+1)+1]q^{\delta_1 kr}$, then  $$\dd^r(\Cc)=\dd_{\bar j}^r(\Cc)>n(\delta_1+1),$$ 
which contradicts Proposition~\ref{proposition:basicproperties}. Hence we conclude that $\bar j < [n(\delta_1+1)+1]q^{\delta_1 kr}$. 
\end{proof}

Theorem~\ref{theorem:boundstabilze} implies that the $(r,j)$-generalized column distances can be computed in a finite amount of time by exhaustive search. As the parameters grow, however, such a computation quickly becomes practically infeasible.

\begin{remark}
Notice that we do not expect the bound in Theorem~\ref{theorem:boundstabilze} to be sharp. In fact, one gets a sharper bound by substituting the quantity $q^{\delta_1kr}$ by $\frac{q^{\delta_1 kr}-1}{(q^r-1)(q^r-q)\cdots(q^r-q^{r-1})}$, which is a tighter bound for the number of vector spaces of the form $(D_j)_{[j+1,j+\delta_1]}$. One therefore obtains $\bar j<[n(\delta_1+1)+1]\frac{q^{\delta_1 kr}-1}{(q^r-1)(q^r-q)\cdots(q^r-q^{r-1})}\sim n\delta_1q^{\delta_1kr-r^2}$. We do not know whether the bound can be further improved by means of different arguments.
\end{remark}

\section{Related definitions and comparison}

\begin{definition}\label{definition:unrestrected}
Let $\Cc$ be a noncatastrophic code. The $r$-th \textbf{unrestricted generalized column distance} of the $j$-truncated code of $\Cc$ is given by
\begin{equation}\label{equation:defunrestricted}
    \dd_r(\Cc(j))= \min\left\{\left\lvert\supp(D)\right\rvert: D\subseteq\Cc(j),\,
    \dim\left(\langle D\rangle_{\F_q}\right)=r\right\}
\end{equation}
for $1\leq r\leq k(j+1)$.
\end{definition}

The concept of unrestricted generalized column distances for noncatastrophic codes was introduced by Cardell, Firer, and Napp in~\cite{CFN19} and further studied in~\cite{CFN20} by the same authors. In particular they proved that the $r$-th unrestricted generalized column distance is strictly increasing as a function of $r$ and they showed how to compute it from the truncated parity-check matrix. 

If the code $\Cc$ is noncatastrophic, then $G_0$ has full rank, hence $G_j^c$ has full rank for every $j\geq 0$. Therefore
\begin{equation}\label{equation:defunrestrictedalternative}
    \dd_r(\Cc(j))= \min\left\{\left\lvert\supp\{v_1G_j^c,\dots,v_rG_j^c\}\right\rvert: v_i\in\F_q^{(j+1)k},\,v_1^0\neq0,\,\dim\left(\langle v_1,\dots,v_r\rangle_{\F_q}\right)=r\right\}.
\end{equation}
Notice that, unlike \eqref{equation:defunrestricted}, \eqref{equation:defunrestrictedalternative} allows to extend Definition~\ref{definition:unrestrected} to the catastrophic case in a similar way as we have done for the generalized column distances. Moreover \eqref{equation:defunrestrictedalternative} implies that, for every code~$\Cc$ and any $1\leq r\leq k$ and $0\leq j$, we have $\dd_r(\Cc(j))\leq\djr(\Cc)$.

If two codes have the same unrestricted generalized column distances, they may not have the same generalized column distances and vice versa, as the next examples show. The first example also shows how the limit as $j$ goes to infinity of the unrestricted generalized column distances may depend only on a submodule of $\Cc$ of rank strictly smaller than $k$.

\begin{example}\label{ex:4.2}
Let $\Cc_1=\langle (1,1,0,0,0),(0,0,1,1,1)\rangle_{\F_q[x]}$ and $\Cc_2=\langle (1,1,0,0,0),(0,x,1,1,1)\rangle_{\F_q[x]}$. One obtains by direct computation that
\begin{equation*}
    \dd_r(\Cc_1(j))=\dd_r(\Cc_2(j))=\begin{cases}2r&\text{if }1\leq r\leq j+1,\\
    2(j+1)+3(r-j-1)&\text{if }j+1<r\leq 2j+2.
    \end{cases}
\end{equation*}
On the other hand, one can check that $\dd_1^2(\Cc_1)=5$, while $\dd_1^2(\Cc_2)=6$. Moreover, $\dd^2(\Cc_1)=5$ and $\dd^2(\Cc_2)=6$, while for all $r\geq 1$ 
\begin{equation*}
    \lim_{j\to\infty}\dd_r(\Cc_1(j))=\lim_{j\to\infty}\dd_r(\Cc_1(j))=2r.
\end{equation*}
\end{example}

\begin{example}
Let $\Cc_1=\langle (1+x,1,0)\rangle_{\F_q[x]}$ and $\Cc_2=\langle (1,1,x)\rangle_{\F_q[x]}$. We have that $\dd_0^1(\Cc_1)=\dd_0^1(\Cc_2)=2$ and $\dd_j^1(\Cc_1)=\dd_j^1(\Cc_2)=3$ for all $j\geq1$. On the other side, $\dd_2(\Cc_1(1))=4$ and $\dd_2(\Cc_2(1))=5$.
\end{example}

For a fixed code and a fixed $r$, the $r$-th unrestricted generalized column distance of the $j$-truncated code may be increasing, decreasing or constant in $j$. This is different from the behavior of $(r,j)$-column distances which are non-decreasing in $j$ for a fixed $r$, as shown in item 4 in Proposition~\ref{proposition:basicproperties}.

\begin{example}
Let $\Cc_1=\langle (1,1,0,0,0),(0,0,1,1,1)\rangle_{\F_q[x]}$ be the code of Example~\ref{ex:4.2} and $\Cc_2=\langle (1,0,0),(0,x,1)\rangle_{\F_q[x]}$. It is easy to check that $$d_2(\Cc_1(0))=5>4=d_2(\Cc_1(1))=d_2(\Cc_1(2))\;\mbox{ and }\; d_4(\Cc_1(0))=4<5=d_4(\Cc_2(1)).$$
\end{example}

We now discuss the relation among $r$-generalized column distances and generalized weights. For an $\F_q$-linear code $\Ll$ we denote by $\dd_r^H(\Ll)$ the $r$-generalized Hamming weight. We refer to~\cite{Wei} for the definition and the basic properties of generalized Hamming weights.

\begin{proposition}
Let $\Cc$ be an $(n,k,\delta)$ convolutional code. Then
\begin{itemize}
\item $\dd_0^r(\Cc)=\dd_r^H(\Cc[0])$ for $1\leq r\leq k$.
\item If $\delta=0$, then $\dd^r(\Cc)=\djr(\Cc)=\dd_r^H(\Cc[0])$ for $1\leq r\leq k$ and $j\geq0$.
\end{itemize}
\end{proposition}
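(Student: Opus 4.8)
The first bullet is essentially a matter of unwinding definitions. Recall that $\Cc(0)\cup\{0\}=\Cc[0]\subseteq\F_q^n$, and that for $j=0$ the sliding matrix $G_0^c$ is just $G_0$, so $v_iG_0^c=v_i^0G_0=v_iG[0]$. Thus the vectors $v_1,\dots,v_r\in\F_q^k$ with $\dim\langle v_1^0,\dots,v_r^0\rangle_{\F_q}=\dim\langle v_1,\dots,v_r\rangle_{\F_q}=r$ are exactly those whose images $v_1G_0,\dots,v_rG_0$ span an $r$-dimensional subspace of $\Cc[0]$ (here I use that $\langle v_1,\dots,v_r\rangle$ having dimension $r$ forces the images to be independent, and conversely). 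Hence $\dd_0^r(\Cc)=\min\{|\supp(D)| : D\subseteq\Cc[0],\ \dim_{\F_q}D=r\}$, which is precisely the definition of $\dd_r^H(\Cc[0])$. The only point requiring a sentence of care is that a catastrophic code may have $\dim\Cc[0]<k$, in which case $\langle v_1^0,\dots,v_r^0\rangle$ cannot have dimension $r$ for $r>\dim\Cc[0]$; but then also $\dd_r^H(\Cc[0])$ is undefined (or $+\infty$ by convention), so the identity still holds in the range where both sides make sense. For a noncatastrophic code this caveat is vacuous.

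For the second bullet, assume $\delta=0$. Then $\Cc$ has a generator matrix $G=G_0$ with entries in $\F_q$ (and $\delta_1=0$, $G_i=0$ for $i\geq1$), so the sliding matrix $G_j^c$ is block-diagonal with $j+1$ copies of $G_0$ on the diagonal. The plan is to show $\dd_j^r(\Cc)=\dd_0^r(\Cc)$ for all $j\geq0$; combined with the first bullet and the definition $\dd^r(\Cc)=\lim_{j\to\infty}\dd_j^r(\Cc)$, this gives the whole statement. The inequality $\dd_0^r(\Cc)\leq\dd_j^r(\Cc)$ is item~4 of Proposition~\ref{proposition:basicproperties}. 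For the reverse inequality, take $v_1,\dots,v_r\in\F_q^k$ realizing $\dd_0^r(\Cc)$, so $\dim\langle v_1,\dots,v_r\rangle=r$ and $|\supp\{v_1G_0,\dots,v_rG_0\}|=\dd_0^r(\Cc)$. Embed each $v_i$ into $\F_q^{k(j+1)}$ by padding with zeros in positions $k+1,\dots,k(j+1)$; call these $\tilde v_i$. Then $\tilde v_i^0=v_i$, so $\dim\langle\tilde v_1^0,\dots,\tilde v_r^0\rangle=r$, and because $G_j^c$ is block-diagonal we get $\tilde v_iG_j^c=(v_iG_0,0,\dots,0)$, whence $\supp\{\tilde v_1G_j^c,\dots,\tilde v_rG_j^c\}$ has the same size as $\supp\{v_1G_0,\dots,v_rG_0\}$. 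Therefore $\dd_j^r(\Cc)\leq\dd_0^r(\Cc)$, and equality follows.

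Alternatively, one can phrase the $\delta=0$ case without the first bullet by noting that $\Cc\cong\Cc[0]\otimes_{\F_q}\F_q[x]$ as an $\F_q[x]$-module with a monomial basis, so every $c\in\Cc$ with $c(0)\neq0$ has $\supp(c_{[0,j]})\supseteq\supp(c[0])$ and the minimum of supports over $r$-dimensional subspaces of truncations is attained already at $j=0$ by taking $\F_q$-scalar (degree-$0$) combinations; I would still present the block-diagonal argument above, since it is cleaner and reuses the machinery already in place.

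\textbf{Main obstacle.} There is no serious obstacle here: both parts are direct consequences of the definitions once one observes that $G_0^c=G_0$ and, in the $\delta=0$ case, that $G_j^c$ is block-diagonal. The only genuinely delicate point is bookkeeping around catastrophic codes in the first bullet — making sure the statement is read as an identity of partially-defined quantities, matching the convention already used in the paper (e.g.\ in the Remark following the definition of $\bar\dd_j^r$, and in the Example with $\Cc=\langle(1,0),(0,x)\rangle$ where $\bar\dd_0^2$ is noted to be undefined). I would add one clause addressing this and otherwise keep the proof to a few lines.
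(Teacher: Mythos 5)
Your argument is correct where the statement is actually true, and it is essentially the paper's approach: the paper's entire proof is the one-line observation that $\dim(\Ll)=\rk(\langle\Ll\rangle_{\F_q[x]})$ for a linear code $\Ll$, which is exactly the "$G_0$ injective, so independent $v_i$ give independent images $v_iG_0$" point you make explicit; your block-diagonal padding argument for $\delta=0$ is a correct and slightly more detailed version of the same definition-unwinding.

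The one place you go wrong is the caveat about catastrophic codes. You claim the identity $\dd_0^r(\Cc)=\dd_r^H(\Cc[0])$ "still holds in the range where both sides make sense," but that is false: the failure mode is not only that $\dim\Cc[0]<r$, it is that $G_0$ can kill a nonzero vector, so linearly independent $v_1,\dots,v_r$ need not have linearly independent images, and $\dd_0^r(\Cc)$ can then be strictly smaller than the minimum over genuine $r$-dimensional subspaces of $\Cc[0]$. The paper's own example $\Cc=\langle(1,0),(0,x)\rangle_{\F_q[x]}$ exhibits this already for $r=1$: there $\dd_0^1(\Cc)=0$ while $\dd_1^H(\Cc[0])=1$, and both sides are perfectly well defined. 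The correct hypothesis for the first bullet is that $\rk(G_0)=k$ (equivalently $\dim\Cc[0]=k$), which holds for noncatastrophic codes and is automatic when $\delta=0$, so your second bullet is unaffected. To be fair, the proposition as stated in the paper has the same issue and its one-line proof does not address it either; but your attempted repair, as phrased, asserts something that the paper's own example refutes, so you should replace that clause with the full-rank hypothesis on $G_0$.
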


\begin{proof}
The statement follows from the fact that for every linear code $\Ll\subseteq\F_q^n$ one has $\dim(\Ll)=\rk\left(\langle \Ll\rangle_{\F_q[x]}\right)$.
\end{proof}

In~\cite{GS} we have defined the generalized weights of a convolutional code as follows.
\begin{definition}
Let $\Cc$ be an $(n,k,\delta)$ convolutional code. The $r$-th \textbf{generalized
weight} of $\Cc$ is
    \begin{equation*}
        \dd_r(\Cc) = \min\left\{\left\lvert\supp\left\{d_1,\dots,d_k\right\}\right\rvert: \rk\left(\langle d_1,\dots,d_r\rangle_{\F_q[x]}\right)\geq r\right\},
    \end{equation*}
for $1\leq r\leq k$.    
\end{definition}

The next theorem gives an equivalent definition for the $r$-generalized column distances of a noncatastrophic code. It will be useful to establish an inequality between generalized weights and $r$-generalized column distances.

\begin{theorem}\label{theorem:coldistreal}
Let $\Cc$ be a noncatastrophic code. Then
\begin{equation*}
        \dd^r(\Cc)= \min\left\{\left\lvert\supp\left\{v_1{G_j^c}',\dots,v_r{G_j^c}'\right\}\right\rvert:j\geq0,\,v_i\in\F_q^{(j+1)k}\text{ with }\dim(\langle v_1^0,\dots,v_r^0\rangle)=r\right\}.
\end{equation*}
\end{theorem}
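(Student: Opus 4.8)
The plan is to prove the equality between the $r$-generalized column distance $\dd^r(\Cc)$ and the minimum over all $j$ of supports of $r$-dimensional spaces spanned by rows of the "wrapped-around" matrix ${G_j^c}'$ (rather than the ordinary truncated sliding matrix $G_j^c$). Throughout I would work in the noncatastrophic setting so that $G_0$ has full rank and hence $\dim(\langle v_1^0,\dots,v_r^0\rangle)=r$ is equivalent to the rows $v_1,\dots,v_r$ being $\F_q$-linearly independent; this lets me move freely between the $v_i$ and the associated polynomial vectors. Denote the right-hand side of the claimed identity by $M$.

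First I would prove $M\leq\dd^r(\Cc)$. Recall from the proof of Theorem~\ref{theorem:boundstabilze} the construction: if $v_1,\dots,v_r\in\F_q^{(\bar j+1)k}$ realize $\dd_{\bar j}^r(\Cc)=\dd^r(\Cc)$ with respect to a row reduced $G$, then the spaces $D_j=\langle\pi_j(v_1){G_j^c}',\dots,\pi_j(v_r){G_j^c}'\rangle$ satisfy $(D_j)_{[0,j]}=(D_{\bar j})_{[0,j]}$. The key point is that for $j\geq\bar j$ the matrix ${G_j^c}'$ has no "missing" tail contributions compared to $G_j^c$ beyond degree $j$, and once $j$ is large enough (at least $\bar j$, and also large enough that the extra block columns of ${G_j^c}'$ past position $n(j+1)$ carry the full wrapped contribution), the support of $\langle v_1{G_j^c}',\dots,v_r{G_j^c}'\rangle$ equals the support of the genuine codewords $p_1G,\dots,p_rG$ where $p_i(x)=\sum_\ell v_i^\ell x^\ell$; this uses that $G$ is row reduced so that $\deg(p_iG)\leq j+\delta_1$ and all of $\supp(p_iG)$ is visible in $v_i{G_j^c}'$. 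Hence $|\supp\{v_1{G_j^c}',\dots,v_r{G_j^c}'\}|=|\supp\{p_1G,\dots,p_rG\}|\geq\dd^r(\Cc)$ is false in general — rather, I want the opposite direction here, so more carefully: taking the $v_i$ that realize $\dd^r(\Cc)$ and reading them as polynomial vectors $p_i$, the codewords $p_iG$ have support contained in $\supp\{v_1{G_j^c}',\dots,v_r{G_j^c}'\}$ for $j$ large, so $M\leq|\supp\{v_1{G_j^c}',\dots\}|$; but also $|\supp\{v_1{G_j^c}',\dots\}|\leq|\supp\{v_1G_{j'}^c,\dots\}|$ for $j'$ sufficiently large since ${G_j^c}'$ is a submatrix-with-wraparound of $G_{j'}^c$ up to column permutation when $j'\geq j+\delta_1$, so the support of the ${G_j^c}'$-image injects into that of the $G_{j'}^c$-image. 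Taking $j'\to\infty$ gives $M\leq\dd^r(\Cc)$.

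For the reverse inequality $\dd^r(\Cc)\leq M$, suppose $v_1,\dots,v_r\in\F_q^{(j+1)k}$ achieve $M$, with $\dim\langle v_1^0,\dots,v_r^0\rangle=r$. Setting $p_i(x)=\sum_\ell v_i^\ell x^\ell$, the rows of ${G_j^c}'$ compute exactly the full polynomial products $p_iG$ (this is the whole point of the primed matrix: the wraparound block columns capture the terms of degree $>j$), provided $j\geq\delta_1$; so $|\supp\{v_1{G_j^c}',\dots,v_r{G_j^c}'\}|=|\supp\{p_1G,\dots,p_rG\}|$. Since $p_i(0)=v_i^0$ and these are $\F_q$-independent, the $r$-dimensional $\F_q$-space $V=\langle p_1G,\dots,p_rG\rangle_{\F_q}$ has the property that every nonzero element has a nonzero constant term in some coordinate's coefficient vector; choosing $j'\geq j$ large enough, $\supp(V_{[0,j']})=\supp\{p_1G,\dots,p_rG\}$ and $V$ witnesses $\dd_{j'}^r(\Cc)\leq|\supp\{p_1G,\dots,p_rG\}|=M$ via the generator-matrix-free description of $\djr$ given in the text. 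Letting $j'\to\infty$ yields $\dd^r(\Cc)\leq M$, and combining the two inequalities completes the proof.

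The main obstacle I anticipate is bookkeeping the relationship between the primed matrix ${G_j^c}'$, the ordinary sliding matrix $G_{j'}^c$, and the honest polynomial products $p_iG$: one must check that for $j\geq\delta_1$ (and $G$ row reduced with memory $\delta_1$) the rows of ${G_j^c}'$ really do equal the coefficient sequences of $p_iG$ with no truncation, which is exactly where the $n\delta_1$ extra columns and the $j\geq\delta_1$ hypothesis are used, and that passing to $G_{j'}^c$ with $j'\geq j+\delta_1$ only permutes and zero-pads columns. Getting these index ranges exactly right — and making sure the support counts are literally equal rather than merely bounded — is the delicate part; everything else is a direct application of the generator-independent formula for $\djr$ recorded just after its definition together with the stabilization already established in Proposition~\ref{proposition:basicproperties} and Theorem~\ref{theorem:boundstabilze}.
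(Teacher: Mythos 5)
Your second inequality ($\dd^r(\Cc)\leq M$) is fine and is the direction the paper dismisses as following from the definition: pad a witness for $M$ with zero blocks, and for $j'$ large enough its truncation is the whole codeword. The problem is the first inequality, $M\leq\dd^r(\Cc)$, which is the actual content of the theorem, and your argument for it does not close. You take $v_1,\dots,v_r$ realizing $\dd_{\bar j}^r(\Cc)=\dd^r(\Cc)$ and arrive at $M\leq\lvert\supp\{v_1{G_{\bar j}^c}',\dots,v_r{G_{\bar j}^c}'\}\rvert=\lvert\supp\{p_1G,\dots,p_rG\}\rvert$; but this is just an instance of the definition of $M$ as a minimum, and the quantity on the right is the \emph{full} support of the codewords $p_iG$, which is in general strictly larger than the truncated support $\lvert\supp\{v_1G_{\bar j}^c,\dots,v_rG_{\bar j}^c\}\rvert=\dd^r(\Cc)$: the minimizers of the truncated problem may carry a nonzero tail in degrees $>\bar j$ that the truncation discards. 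Passing to $G_{j'}^c$ for $j'\geq\bar j+\delta_1$ does not help, because for the zero-padded $v_i$ one only has $\dd_{j'}^r(\Cc)\leq\lvert\supp\{v_1G_{j'}^c,\dots,v_rG_{j'}^c\}\rvert$, not equality, and the right-hand side is again the full support of the same $p_iG$. A telltale sign that something essential is missing: your argument for this direction never uses noncatastrophicity, yet this inequality fails for catastrophic codes (in Example~\ref{example:catgencoldis} one has $\dd^1(\Cc)=1$ while the right-hand side of the theorem equals $2$).

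The missing idea is how to strip the tail. The paper's proof takes minimizers at level $\bar j+A$ with $A=q^{\delta_1kr}$, applies the pigeonhole principle to find two truncation levels $\bar j\leq j_1<j_2$ at which the corresponding codewords have the same degree-$[j+1,j+\delta_1]$ tail $d_i$ up to a shift, subtracts to get an element of the form $(x^{a}-x^{b})d_i\in\Cc$, and invokes noncatastrophicity via \cite[Proposition 3.1]{GS} to conclude $d_i\in\Cc$. One can then subtract the shifted tail from the codeword without changing either its truncation at level $\bar j$ or the constant blocks $v_i^0$, producing genuine codewords whose full support is exactly $\dd^r(\Cc)$ and which therefore witness $M\leq\dd^r(\Cc)$. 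Some version of this tail-stripping step (or another argument making essential use of noncatastrophicity) is indispensable; without it the first inequality is not proved.
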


\begin{proof}
By definition of $\dd^r(\Cc)$, the left-hand side is smaller than or equal to the right-hand side. So it suffices to prove the reverse inequality. By Proposition~\ref{proposition:basicproperties} there exists a $\bar j$ such that $\dd_{\bar j}^r(\Cc)=\dd^r(\Cc)$. Let $G$ be a generator matrix for $\Cc$ and let $v_1,\dots,v_r\in\F_q^{(A+\bar j+1)k}$ with $A=q^{\delta_1kr}$ be vectors that realize $\dd_{\bar j+A}^r(\Cc)$ with respect to $G$. Then there exist $\bar j\leq j_1<j_2\leq \bar j+A$ such that, for $1\leq i\leq r$, $(v_i^0,\dots v_i^{j_1}){G_{j_1}^{c'}}$ corresponds to a codeword of the form $c_i+x^md_i$ and $(v_i^0,\dots v_i^{j_2}){G_{j_2}^{c'}}$ corresponds to a codeword of the form $c_i+x^nd_i$, where $n>m> \bar j$, $\deg(c_i)\leq \bar j$ and $\deg(d_i)\leq\delta_1-1$. Therefore,
$(x^n-x^m)d_i=c_i+x^nd_i-(c_i+x^md_i)\in\Cc$. Since the code is noncatastrophic, $d_i\in\Cc$ by~\cite[Proposition 3.1]{GS}, so there are $w_1,\dots,w_r\in\F_q^{(j_1+1)k}$ such that $w_1^0=\dots=w_r^0=0$ and $(v_i^0-w_i^0,\dots v_i^{j_1}-w_i^{j_1}){G_{j_1}^c}'$ corresponds to $c_i$. Hence
\begin{equation*}
\dd^r(\Cc)= \dd_{\bar j}^r(\Cc)=\left\lvert\supp\left\{(v_1^0-w_1^0,\dots v_1^{j_1}-w_1^{j_1}){G_{j_1}^{c'}},\dots,(v_r^0-w_r^0,\dots v_r^{j_1}-w_r^{j_1}){G_{j_1}^{c'}}\right\}\right\rvert.
\end{equation*}
This concludes the proof, since $\dim(\langle v_1^0,\dots,v_r^0\rangle)=r$ implies that $\dim(\langle(v_1-w_1)^0,\dots,(v_r-w_r)^0\rangle)=r$.
\end{proof}

If the code is catastrophic, the inequality
\begin{equation*}
        \dd^r(\Cc)\leq\min\{\lvert\supp\{v_1{G_j^{c'}},\dots,v_r{G_j^{c'}}\}\rvert:j\geq0,\,v_i\in\F_q^{(j+1)k}\text{ with }\dim(v_1^0,\dots,v_r^0)=r\},
\end{equation*} still holds. However we do not always have equality, as the next example shows.

\begin{example}\label{example:catgencoldis}
Consider the convolutional code $\Cc=\langle(1,x),(x,1)\rangle_{\F_q[x]}$. We have that $(1-x^2,0)=(1,x)-x(x,1)\in\Cc$. Similarly, $(1-x^{2n},0),(0,1-x^{2n})\in\Cc$. Therefore $\djr(\Cc)=\dd^r(\Cc)=r$ for all $j\geq 0$ and $r=1,2$. On the other side,
$$\min\left\{\left\lvert\supp\left\{v{G_j^c}'\right\}\right\rvert:j\geq0,\,v\in\F_q^{(j+1)k}\text{ with }v_1^0\neq0\right\}=2.$$
In particular, $\dd_1(\Cc)=2$.
\end{example}

For a noncatastrophic code, the $r$-th generalized weight is smaller than or equal to the $r$-generalized column distance. In addition, equality holds for $r=1$.

\begin{corollary}\label{corollary:ineqgenweig}
Let $\Cc$ be a noncatastrophic code. Then $$\dd_r(\Cc)\leq\dd^r(\Cc).$$ 
Moreover, $\dd_1(\Cc)=\dd^1(\Cc)=\dd_{\mathrm{free}}(\Cc)$.
\end{corollary}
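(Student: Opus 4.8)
The plan is to deduce both statements from Theorem~\ref{theorem:coldistreal} together with the definition of the generalized weight $\dd_r(\Cc)$. For the inequality $\dd_r(\Cc)\leq\dd^r(\Cc)$, I would start from a choice of $j\geq 0$ and vectors $v_1,\dots,v_r\in\F_q^{(j+1)k}$ with $\dim(\langle v_1^0,\dots,v_r^0\rangle_{\F_q})=r$ that realize the minimum in Theorem~\ref{theorem:coldistreal}, so that $\dd^r(\Cc)=|\supp\{v_1{G_j^c}',\dots,v_r{G_j^c}'\}|$. Setting $d_i=v_i{G_j^c}'$, these are $r$ elements of $\Cc$, and I need to check that $\rk(\langle d_1,\dots,d_r\rangle_{\F_q[x]})\geq r$; this is where the condition $\dim(\langle v_1^0,\dots,v_r^0\rangle_{\F_q})=r$ is used, since the zero-degree coefficient of $d_i$ is $v_i^0G_0$ and, $\Cc$ being noncatastrophic, $G_0$ has full rank $k$, so the $d_i[0]=v_i^0G_0$ are $\F_q$-linearly independent, forcing the $d_i$ to be $\F_q[x]$-linearly independent. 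Completing $d_1,\dots,d_r$ to a generating set $d_1,\dots,d_k$ of a submodule (or padding with zeros is not allowed — rather extend to $k$ elements of $\Cc$), the definition of $\dd_r(\Cc)$ gives $\dd_r(\Cc)\leq|\supp\{d_1,\dots,d_k\}|$; but I must be careful that the support can only grow, so I should instead take $d_{r+1}=\dots=d_k=0$ is illegitimate, and the cleanest route is to note $\dd_r(\Cc)\leq|\supp\{d_1,\dots,d_r\}|\leq|\supp\{d_1,\dots,d_k\}|$ using that $\dd_r(\Cc)$ minimizes over $k$-tuples whose first $r$ entries have rank $\geq r$, and $\supp$ of the $r$-tuple is a subset. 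Thus $\dd_r(\Cc)\leq|\supp\{d_1,\dots,d_r\}|=\dd^r(\Cc)$.

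For the ``moreover'' part, the identity $\dd_1(\Cc)=\df(\Cc)$ is essentially the definition of the free distance together with the $r=1$ case of the generalized weight (it should either be cited from \cite{GS} or follow immediately: $\dd_1(\Cc)=\min\{|\supp\{d_1,\dots,d_k\}|:\rk\langle d_1\rangle_{\F_q[x]}\geq 1\}=\min\{\wt(d):0\neq d\in\Cc\}=\df(\Cc)$, where the middle equality uses that one may take $d_2=\dots=d_k=0$). So it remains to show $\dd^1(\Cc)=\df(\Cc)$. By item~1 of Proposition~\ref{proposition:basicproperties}, $\dd_j^1(\Cc)=\dd_j^c(\Cc)$, and for a noncatastrophic code $\dd_j^c(\Cc)=\min\{\wt(c_{[0,j]}):c\in\Cc,\,c(0)\neq 0\}$; taking the limit as $j\to\infty$ gives $\dd^1(\Cc)=\lim_j\dd_j^c(\Cc)$, which equals $\min\{\wt(c):c\in\Cc,\,c(0)\neq 0\}$. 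Finally one observes that for a noncatastrophic code every nonzero codeword is an $\F_q[x]$-multiple of one with nonzero constant term, namely $c=x^ac'$ with $c'(0)\neq0$, and $\wt(c)=\wt(c')$, so the minimum over $\{c:c(0)\neq0\}$ coincides with the minimum over all nonzero codewords, i.e.\ with $\df(\Cc)$.

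The only real subtlety is the well-definedness issue at the two endpoints: one must make sure that in the chain $\dd_r(\Cc)\leq|\supp\{d_1,\dots,d_r\}|\leq|\supp\{d_1,\dots,d_k\}|$ the relevant tuple genuinely competes in the minimum defining $\dd_r(\Cc)$, which is exactly the rank condition $\rk(\langle d_1,\dots,d_r\rangle_{\F_q[x]})\geq r$ verified above via the full rank of $G_0$. I would present the argument in that order: first record the noncatastrophic consequence $\rk(G_0)=k$ and its implications for $d_i[0]$, then invoke Theorem~\ref{theorem:coldistreal} to get the inequality, then handle $r=1$ via Proposition~\ref{proposition:basicproperties}(1) and the noncatastrophic formula for $\dd_j^c$. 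I expect essentially no obstacle beyond this bookkeeping; the substantive content has already been isolated in Theorem~\ref{theorem:coldistreal}, so this corollary is a short deduction.
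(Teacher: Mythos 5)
Your proposal is correct and follows essentially the same route as the paper's proof: take realizers from Theorem~\ref{theorem:coldistreal}, set $d_i=v_i{G_j^c}'\in\Cc$, and use that the constant terms $d_i(0)=v_i^0G_0$ are $\F_q$-linearly independent (via $\rk(G_0)=k$ for noncatastrophic codes) to force $\rk\left(\langle d_1,\dots,d_r\rangle_{\F_q[x]}\right)=r$, so that $\dd_r(\Cc)\leq\lvert\supp\{d_1,\dots,d_r\}\rvert=\dd^r(\Cc)$. The only (cosmetic) divergence is in the ``moreover'' part, where the paper closes the loop $\df(\Cc)\leq\dd^1(\Cc)\leq\dd_1(\Cc)=\df(\Cc)$ directly from Theorem~\ref{theorem:coldistreal}, while you argue $\dd^1(\Cc)\leq\df(\Cc)$ through the noncatastrophic formula for $\dd_j^c$ and the division-by-$x$ trick; both are immediate, and your hesitation about zero-padding in the definition of $\dd_r(\Cc)$ is unnecessary since the tuple $(d_1,\dots,d_r,0,\dots,0)$ is admissible there.
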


\begin{proof}
Let $G$ be a generator matrix for $\Cc$ and let $c_1,\dots,c_k$ be the rows of $G$.
By Theorem~\ref{theorem:coldistreal} there exist $j\geq 0$ and $v_1,\dots,v_r\in\F_q^{(j+1)k}$ such that $\dd^r(\Cc)=\left\lvert\supp\left\{v_1{G_j^c}',\dots,v_r{G_j^c}'\right\}\right\rvert$ and $\dim(\langle v_1^0,\dots,v_r^0\rangle)=r$.
For each $1\leq i\leq r$, let $d_i\in\Cc$ be the element that corresponds to $v_i{G_j^c}'$. In order to conclude, it suffices to show that $\dim(\langle v_1^0,\dots,v_r^0\rangle)=r$ implies $\rk\left(\langle d_1,\dots,d_r\rangle_{\F_q[x]}\right)=r$. Since the code is noncatastrophic, it suffices to show that $\sum q_id_i\neq0$ for every set of polynomials $q_1,\dots,q_r$ such that at least one of them is not divisible by $x$. Suppose that $x\nmid q_s$. Since $\dim(\langle v_1^0,\dots,v_r^0\rangle)=r$, then
$$q_s(0)v_s^0G_0+\sum_{i\neq s}q_i(0)v_i^0G_0\neq0,$$
which implies that
\begin{equation*}
    q_s(0)d_s(0)+\sum_{i\neq s}q_i(0)d_i(0)=\sum_{i=1}^rq_i(0)d_i(0)\neq0.
\end{equation*}
In particular, $\sum q_id_i\neq0$. Finally,
\begin{equation*}
    \dd_{\mathrm{free}}(\Cc)\leq \dd^1(\Cc)\leq\dd_1(\Cc)=\dd_{\mathrm{free}}(\Cc),
\end{equation*}
where the first inequality follows from Theorem~\ref{theorem:coldistreal}.
\end{proof}

Example~\ref{example:catgencoldis} shows that for a catastrophic code one can have $\dd^r(\Cc)<\dd_r(\Cc)$. The next example shows that there are catastrophic codes for which $\dd_r(\Cc)<\dd^r(\Cc)$.

\begin{example}
Let $\Cc=\langle(1,x,0),(0,1,1)\rangle_{\F_q[x]}$. One can check that $\Cc$ is noncatastrophic and $\dd_2(\Cc)=3<4=\dd^2(\Cc)$.
\end{example}

We conclude this section by exhibiting families of codes whose $r$-th generalized weight coincides with their $r$-generalized column distance for all $1\leq r\leq k$. The proof of the next proposition is similar to the one of~\cite[Proposition 5.6]{GS}.

\begin{proposition}\label{proposition:coldisboundMDS}
Let $\Cc$ be a noncatastrophic MDS $(n,k,\delta)$ convolutional code. If $k\nmid \delta$, let $0<a<k$ such that $\delta=k\left\lceil\frac{\delta}{k}\right\rceil-a$. Then, for $1\leq r\leq a$
\begin{equation*}
\dd^r(\Cc)=\dd_r(\Cc)=(n-k)\left(\left\lfloor\frac{\delta}{k}\right\rfloor+1\right)+\delta+r,
\end{equation*}
and for $a<r\leq k$
\begin{equation*}
\dd^r(\Cc)\leq \dd_r^H(\Cc[0])+(n-k)\left(\left\lfloor\frac{\delta}{k}\right\rfloor+1\right)+\delta.
\end{equation*}
If $k\mid\delta$, then $$\dd^r(\Cc)=\dd_r(\Cc)=(n-k)\left(\frac{\delta}{k}+1\right)+\delta+r.$$
\end{proposition}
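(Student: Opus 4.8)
The plan is to sandwich $\dd^r(\Cc)$ between two equal quantities. The lower bound comes for free: by Corollary~\ref{corollary:ineqgenweig} we have $\dd_r(\Cc)\le\dd^r(\Cc)$, and by \cite[Proposition 5.6]{GS} the generalized weight $\dd_r(\Cc)$ equals the claimed value $(n-k)\left(\left\lfloor\frac{\delta}{k}\right\rfloor+1\right)+\delta+r$ for $1\le r\le a$ (and likewise in the case $k\mid\delta$ for all $r$). Hence the content of the proof is the matching upper bound on $\dd^r(\Cc)$ in that range, together with the bound $\dd^r(\Cc)\le\dd_r^H(\Cc[0])+(n-k)\left(\left\lfloor\frac{\delta}{k}\right\rfloor+1\right)+\delta$ for $a<r\le k$.

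For the upper bounds I would use Theorem~\ref{theorem:coldistreal}. Since $\Cc$ is noncatastrophic, $\rk(G_0)=k$, so for codewords written as $d_i=p_iG$ the condition $\dim_{\F_q}\langle p_1(0),\dots,p_r(0)\rangle=r$ is equivalent to $d_1(0),\dots,d_r(0)$ being $\F_q$-linearly independent; Theorem~\ref{theorem:coldistreal} then yields $\dd^r(\Cc)\le|\supp\{d_1,\dots,d_r\}|$ for \emph{any} codewords $d_1,\dots,d_r\in\Cc$ with $\F_q$-linearly independent constant terms. It therefore suffices to produce such codewords achieving the prescribed support size, which is essentially the construction used to prove \cite[Proposition 5.6]{GS}: for $1\le r\le a$ one builds $d_1,\dots,d_r\in\Cc$ of degree at most $\left\lfloor\frac{\delta}{k}\right\rfloor$ whose supports overlap in all but $r-1$ of the degree-$0$ coordinates, so that $|\supp\{d_1,\dots,d_r\}|=\df(\Cc)+r-1=(n-k)\left(\left\lfloor\frac{\delta}{k}\right\rfloor+1\right)+\delta+r$, using that $\Cc$ is MDS so $\df(\Cc)=(n-k)\left(\left\lfloor\frac{\delta}{k}\right\rfloor+1\right)+\delta+1$; for $a<r\le k$ one enlarges this family by $r-a$ further codewords chosen so that the constant terms $d_1(0),\dots,d_r(0)$ together span an $r$-dimensional subspace of $\Cc[0]$ realizing $\dd_r^H(\Cc[0])$, at the extra cost of allowing degree $\left\lfloor\frac{\delta}{k}\right\rfloor+1$, which accounts for the summand $(n-k)\left(\left\lfloor\frac{\delta}{k}\right\rfloor+1\right)+\delta$. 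The one point to add to the argument of \cite{GS} is the observation that the constant terms of the constructed codewords are $\F_q$-linearly independent, which is automatic here since by construction they span an $r$-dimensional subspace of $\Cc[0]$.

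Assembling the pieces, for $1\le r\le a$ (and for all $r$ when $k\mid\delta$) one obtains
$$(n-k)\left(\left\lfloor\frac{\delta}{k}\right\rfloor+1\right)+\delta+r=\dd_r(\Cc)\le\dd^r(\Cc)\le|\supp\{d_1,\dots,d_r\}|=(n-k)\left(\left\lfloor\frac{\delta}{k}\right\rfloor+1\right)+\delta+r,$$
forcing equality throughout, while for $a<r\le k$ the same construction gives only the stated upper bound on $\dd^r(\Cc)$. The main obstacle is the codeword construction: one has to transcribe the proof of \cite[Proposition 5.6]{GS} and verify that it simultaneously controls the total support and the $\F_q$-independence of the constant terms. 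I expect the delicate range to be $a<r\le k$, where the generalized Hamming weight data of the block code $\Cc[0]$ must be interlocked with the convolutional structure dictated by the MDS row degrees.
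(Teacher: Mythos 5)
Your skeleton matches the paper's in outline: lower bound from Corollary~\ref{corollary:ineqgenweig} plus the value of $\dd_r(\Cc)$ from \cite[Proposition 5.6]{GS}, upper bound by exhibiting codewords with $\F_q$-independent constant terms and bounding their joint support. But both places where you defer to an unstated construction are exactly where the content lies, and in both your sketch has a gap. For $1\leq r\leq a$ you assert the existence of $d_1,\dots,d_r$ of degree at most $\lfloor\delta/k\rfloor$ with independent constant terms and joint support exactly $\df(\Cc)+r-1$. This is not a routine import from \cite[Proposition 5.6]{GS}: that result controls the support of a rank-$r$ $\F_q[x]$-submodule, and a minimizing submodule need not come with generators whose constant terms are $\F_q$-independent, so your ``automatic by construction'' is circular until the construction is actually written down. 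The paper sidesteps this entirely: it only needs the single case $r=a$, where $c_1,\dots,c_a$ (all of degree $\lfloor\delta/k\rfloor$, with independent constant terms since the code is noncatastrophic) have joint support at most the whole ambient space $n\left(\lfloor\delta/k\rfloor+1\right)=\df(\Cc)+a-1$; the strict monotonicity $\dd^r(\Cc)<\dd^{r+1}(\Cc)$ from item 3 of Proposition~\ref{proposition:basicproperties}, anchored at $\dd^1(\Cc)=\df(\Cc)$, then forces every intermediate value. Adopting that squeeze removes the need for any per-$r$ construction.

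For $a<r\leq k$ your count is off by $a$: taking $d_1,\dots,d_r$ realizing $\dd_r^H(\Cc[0])$ and simply ``allowing degree $\lfloor\delta/k\rfloor+1$'' only gives $\dd_r^H(\Cc[0])+n\left(\lfloor\delta/k\rfloor+1\right)=\dd_r^H(\Cc[0])+(n-k)\left(\lfloor\delta/k\rfloor+1\right)+\delta+a$, which exceeds the stated bound. The missing step is the paper's correction: add suitable multiples of $xc_1,\dots,xc_a$ to the $d_i$ so that the degree-$\delta_1$ layer of the joint support has size at most $n-a$ (this does not disturb the constant terms), which is precisely what recovers the summand $(n-k)\left(\lfloor\delta/k\rfloor+1\right)+\delta$.
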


\begin{proof}
Let $c_1,\dots,c_k$ be a basis for $\Cc$. Since $\Cc$ is MDS, if $k\nmid\delta$, we may assume that $\deg(c_1)=\dots=\deg(c_a)=\left\lfloor\frac{\delta}{k}\right\rfloor$ and $\deg(c_{a+1})=\dots=\deg(c_k)=\left\lceil\frac{\delta}{k}\right\rceil$, see~\cite[Proposition 5.6]{GS} for more details. For $r\leq a$, we conclude by observing that
$\lvert\supp\{c_1,\dots,c_a\}\rvert\leq \dd_{\mathrm{free}}(\Cc)+a$ and by Proposition~\ref{proposition:basicproperties}. If $r>a$, let $d_1,\dots,d_r$ be linearly independent elements of $\langle c_1,\dots,c_k\rangle_{\F_q}$ such that $\lvert \supp\{d_1(0),\dots,d_r(0)\}\rvert=\dd^H_r(\Cc[0])$ and $\dim(\langle d_1(0),\dots,d_r(0)\rangle_{\F_q})=r$. Up to using $xc_1,\dots,xc_a$ to reduce the support, we may assume that $\lvert \supp\{(d_1)_{[\delta_1,\delta_1]},\dots,(d_r)_{[\delta_1,\delta_1]}\}\rvert\leq n-a$. Then
\begin{equation*}
    \lvert \supp\{d_1,\dots,d_r\}\rvert\leq \dd_r^H(\Cc[0])+n\left(\left\lfloor\frac{\delta}{k}\right\rfloor\right)+n-a=\dd_r^H(\Cc[0])+(n-k)\left(\left\lfloor\frac{\delta}{k}\right\rfloor+1\right)+\delta.
\end{equation*}
If $k\mid \delta$, then 
$\deg(c_1)=\dots=\deg(c_k)=\frac{\delta}{k}$ and 
\begin{equation*}
\left\lvert\supp\{c_1,\dots,c_k\}\right\rvert\leq n\left(\frac{\delta}{k}+1\right)=(n-k)\left(\frac{\delta}{k}+1\right)+\delta+k.
\end{equation*}
We conclude by items 1 and 3 of Proposition~\ref{proposition:basicproperties} and~\cite[Proposition 5.4]{GS}. 
\end{proof}

Notice that the generalized column distances of an MDS code are not determined by its parameters, as the next example shows.

\begin{example}
Let $\Cc=\langle(1,1,2),(2x,x+1,0)\rangle_{\F_q[x]}$. The code $\Cc$ is MDS and one can check that $\dd^2(\Cc)=5$. The reverse code $\mathrm{rev}(\Cc)$ of $\Cc$ is generated by $(1,1,2)$ and $(2,1+x,0)$. It is MDS, it has the same parameters as $\Cc$, and $\dd^2(\mathrm{rev}(\Cc))=4$. 
\end{example}

\begin{proposition}\label{propostion:mds+mdp}
Let $\Cc$ be a noncatastrophic convolutional code. If $\Cc$ is MDS and MDP, then $\dd_r(\Cc)=\dd^r(\Cc)$ for $1\leq r \leq k$.
\end{proposition}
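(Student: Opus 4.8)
Since Corollary~\ref{corollary:ineqgenweig} already gives $\dd_r(\Cc)\le\dd^r(\Cc)$ for every $r$, the whole content of the statement is the reverse inequality $\dd^r(\Cc)\le\dd_r(\Cc)$. Proposition~\ref{proposition:coldisboundMDS} proves this, with equality, when $k\mid\delta$ and, when $k\nmid\delta$, for every $r\le a$; so the only case left to treat is $k\nmid\delta$ and $a<r\le k$, for which Proposition~\ref{proposition:coldisboundMDS} supplies merely the upper bound $\dd^r(\Cc)\le\dd_r^H(\Cc[0])+(n-k)(\lfloor\delta/k\rfloor+1)+\delta$. The first step would be to sharpen this using the MDP hypothesis.

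The point is that an MDP code satisfies $\dd_0^c(\Cc)=n-k+1$. Since $\Cc$ is noncatastrophic one has $\dim(\Cc[0])=\rk(G_0)=k$, so $\Cc[0]$ is an $[n,k,n-k+1]$ block code, hence MDS, and therefore $\dd_r^H(\Cc[0])=n-k+r$ for all $1\le r\le k$. Feeding this back into the bound above yields
\begin{equation*}
\dd^r(\Cc)\le(n-k)\left(\left\lfloor\frac{\delta}{k}\right\rfloor+2\right)+\delta+r \qquad\text{for } a<r\le k .
\end{equation*}

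It then remains to prove the matching lower bound $\dd_r(\Cc)\ge(n-k)(\lfloor\delta/k\rfloor+2)+\delta+r$ in that range; together with the previous display and Corollary~\ref{corollary:ineqgenweig} this forces $\dd_r(\Cc)=\dd^r(\Cc)=(n-k)(\lfloor\delta/k\rfloor+2)+\delta+r$, the remaining values of $r$ being covered by Proposition~\ref{proposition:coldisboundMDS}. For the lower bound I would start from codewords $d_1,\dots,d_r\in\Cc$ realizing $\dd_r(\Cc)$ (so $\F_q[x]$-linearly independent, with $|\supp\{d_1,\dots,d_r\}|=\dd_r(\Cc)$), and, after replacing each $d_i$ by $x^{-1}d_i$ as long as it stays in $\Cc$ — legitimate because $\Cc$ is noncatastrophic, and harmless for supports — assume $d_i(0)\ne0$ for some $i$. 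Using the MDS row-degree structure of $\Cc$ (recalled in the proof of Proposition~\ref{proposition:coldisboundMDS}) one reduces $d_1,\dots,d_r$, without increasing their joint support, to codewords of bounded degree; their truncations at a suitable level $j$ still realize the joint support and form an $\F_q$-independent family with some nonzero evaluation at $0$, so $\dd_r(\Cc)$ is bounded below by the unrestricted generalized column distance $\dd_r(\Cc(j))$. Finally the MDP hypothesis enters through $\dd_1(\Cc(j))=\dd_j^c(\Cc)=(n-k)(j+1)+1$ and the strict monotonicity in $r$ of the unrestricted generalized column distances, which give $\dd_r(\Cc(j))\ge(n-k)(j+1)+r$ in the MDP range; taking $j$ as large as that range allows and accounting for the extra coordinates forced by the $\lfloor\delta/(n-k)\rfloor$ part of $L=\lfloor\delta/k\rfloor+\lfloor\delta/(n-k)\rfloor$ should then deliver the target bound.

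I expect the genuine difficulty to be concentrated in the degree-reduction and support-control step: the MDP hypothesis is naturally a statement about truncations $\Cc(j)$, whereas $\dd_r(\Cc)$ concerns honest codewords, and for a general noncatastrophic code the joint support can really drop when passing from truncated to full codewords, as the example $\Cc=\langle(1,x,0),(0,1,1)\rangle$ (with $\dd_2(\Cc)=3<4=\dd^2(\Cc)$) shows. Excluding such a collapse for MDS/MDP codes is delicate because a change of basis of the rank-$r$ submodule generated by $d_1,\dots,d_r$ may a priori enlarge their joint support, so the reduction to bounded degree has to be performed while simultaneously keeping the support under control — most plausibly through an exchange argument played against a truncated code $\Cc(j)$ with $j$ in the MDP range, where the near-MDS rigidity forced by the hypothesis leaves no room for a collapse.
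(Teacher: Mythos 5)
There is a genuine gap. Your plan funnels everything into the single unconditional lower bound $\dd_r(\Cc)\geq(n-k)\left(\left\lfloor\frac{\delta}{k}\right\rfloor+2\right)+\delta+r$ for $a<r\leq k$, but you neither prove it nor is it clear that it holds: the proposition only asserts $\dd_r(\Cc)=\dd^r(\Cc)$, and Proposition~\ref{proposition:coldisboundMDS} gives only an \emph{upper} bound on $\dd^r$ in that range, so you would be proving something strictly stronger than what is needed. Worse, the route you sketch cannot reach that bound: from $\dd_r(\Cc)\geq\dd_r(\Cc(j))\geq(n-k)(j+1)+r$ with $j$ in the MDP range $j\leq L=\left\lfloor\frac{\delta}{k}\right\rfloor+\left\lfloor\frac{\delta}{n-k}\right\rfloor$ you get at best $(n-k)\left(\left\lfloor\frac{\delta}{k}\right\rfloor+1\right)+(n-k)\left\lfloor\frac{\delta}{n-k}\right\rfloor+r$, which falls short of your target by $(n-k)+\delta-(n-k)\left\lfloor\frac{\delta}{n-k}\right\rfloor\geq n-k>0$; the phrase ``accounting for the extra coordinates'' is precisely the missing argument. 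You also correctly flag the submodule-versus-truncation difficulty, but you do not resolve it.

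The idea you are missing is a case split on codewords $d_1,\dots,d_r$ realizing $\dd_r(\Cc)$ (after dividing by powers of $x$, legitimate since $\Cc$ is noncatastrophic, one may assume $d_1(0)\neq0$). If $\dim\left(\langle d_1(0),\dots,d_r(0)\rangle_{\F_q}\right)=r$, then $V=\langle d_1,\dots,d_r\rangle_{\F_q}$ is an admissible competitor in the definition of $\djr(\Cc)$ for every $j$, so $\djr(\Cc)\leq\lvert\supp\{d_1,\dots,d_r\}\rvert=\dd_r(\Cc)$ and hence $\dd^r(\Cc)\leq\dd_r(\Cc)$ directly --- no numerical bound and no MDP hypothesis is needed here, and this is exactly the case your strategy cannot absorb. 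Only in the remaining case, where some nonzero $d\in\langle d_1,\dots,d_r\rangle_{\F_q}$ satisfies $d(0)=0$, does MDP enter, and then only through $\dd_1^H(\Cc[0])=n-k+1$: the degree-zero layer of $\supp\{d_1,\dots,d_r\}$ has size at least $n-k+1$, while MDS-ness accounts for the higher-degree layers, giving $\dd_r(\Cc)\geq\dd_r^H(\Cc[0])+(n-k)\left(\left\lfloor\frac{\delta}{k}\right\rfloor+1\right)+\delta$; this matches the upper bound on $\dd^r(\Cc)$ from Proposition~\ref{proposition:coldisboundMDS}, and the sandwich with Corollary~\ref{corollary:ineqgenweig} finishes the proof. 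Your observation that MDP forces $\Cc[0]$ to be MDS, hence $\dd_r^H(\Cc[0])=n-k+r$, is correct but is a cosmetic sharpening of the upper bound rather than the substance of the argument.
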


\begin{proof}
If $k\mid\delta$, we conclude  by Proposition~\ref{proposition:coldisboundMDS}. So suppose that $k\nmid\delta$ and let $0<a<k$ be such that $\delta=k\left\lceil\frac{\delta}{k}\right\rceil-a$. Let $d_1,\dots,d_r$ be elements in $\Cc$ that realize $\dd_r(\Cc)$, that is, $d_1,\dots,d_r$ generate a subcode of rank $r$ and weight $\dd_r(\Cc)$. We refer to~\cite[Definition 3.3 and Definition 3.5]{GS} for the definition of weight of a code and of what it means for a set of elements of $\Cc$ to realize the $r$-th generalized weight. Since $\Cc$ is noncatastrophic, we may assume without loss of generality that $d_1(0)\neq 0$. If $\dim(\langle d_1(0),\dots,d_r(0)\rangle_{\F_q})=r$, then $d_r(\Cc)\geq d^r(\Cc)$ and we conclude by Corollary~\ref{corollary:ineqgenweig}. Otherwise, suppose that there exists $d\in\langle d_1,\dots,d_r\rangle_{\F_q}$ such that $d(0)=0$. Notice that $d_1^H(\Cc[0])=n-k+1$ since $\Cc$ is MDP. Then
\begin{equation*}
    \left\lvert\supp\{d_1,\dots, d_r\}\right\rvert\geq (n-k+1)+(n-k)\left(\left\lfloor\frac{\delta}{k}\right\rfloor+1\right)+\delta+r-1
\end{equation*}
Therefore, by Proposition~\ref{proposition:coldisboundMDS} and Corollary~\ref{corollary:ineqgenweig} 
\begin{equation*}
    \dd_r^H(\Cc[0])+(n-k)\left(\left\lfloor\frac{\delta}{k}\right\rfloor+1\right)+\delta=\dd_r(\Cc)\leq\dd^r(\Cc)\leq \dd_r^H(\Cc[0])+(n-k)\left(\left\lfloor\frac{\delta}{k}\right\rfloor+1\right)+\delta.
\end{equation*}
We conclude that in any case $\dd_r(\Cc)=\dd^r(\Cc)$.
\end{proof}

An $(n,k,\delta)$ strongly MDS code such that $(n-k)\mid\delta$ is both MDS and MDP. The next corollary then follows from Proposition~\ref{propostion:mds+mdp}.

\begin{corollary}
Let $\Cc$ be a noncatastrophic convolutional code. If $\Cc$ is strongly MDS and \mbox{$(n-k)\mid\delta$}, then $\dd_r(\Cc)=\dd^r(\Cc)$ for $1\leq r \leq k$.
\end{corollary}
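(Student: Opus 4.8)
The plan is to reduce the statement to Proposition~\ref{propostion:mds+mdp}, which already establishes $\dd_r(\Cc)=\dd^r(\Cc)$ for noncatastrophic codes that are simultaneously MDS and MDP. So the only thing one needs to do is check that a noncatastrophic strongly MDS $(n,k,\delta)$ code $\Cc$ with $(n-k)\mid\delta$ satisfies both of these properties.

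First I would recall the relevant definitions. Every strongly MDS code is MDS by definition, since strong MDS-ness requires that the column distances reach the free distance at the smallest index $L=\left\lceil\frac{\delta}{k}\right\rceil+\left\lceil\frac{\delta}{n-k}\right\rceil$ and that the free distance attain the generalized Singleton bound. The point to verify is that, under the divisibility hypothesis $(n-k)\mid\delta$, such a code is also MDP, i.e. $\dd_j^c(\Cc)=(n-k)(j+1)+1$ for $0\le j\le\left\lfloor\frac{\delta}{k}\right\rfloor+\left\lfloor\frac{\delta}{n-k}\right\rfloor$. This is a standard fact in the theory of strongly MDS convolutional codes, and it is exactly the statement quoted in the sentence preceding the corollary, so I would invoke it with an appropriate citation rather than reprove it; informally, when $(n-k)\mid\delta$ the index $L$ at which strong MDS-ness forces the column distance to equal the Singleton-optimal free distance coincides with the top of the range appearing in the MDP bound, and then monotonicity of the column distances together with the per-index Singleton bound for each $\dd_j^c$ forces equality throughout that range.

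Having verified that $\Cc$ is both MDS and MDP, I would simply apply Proposition~\ref{propostion:mds+mdp} to conclude $\dd_r(\Cc)=\dd^r(\Cc)$ for all $1\le r\le k$. The only nontrivial point is the implication ``strongly MDS and $(n-k)\mid\delta$ implies MDP''; since the paper records this just before the corollary, I expect the intended argument treats it as known, making the corollary an immediate consequence of Proposition~\ref{propostion:mds+mdp} with no further computation.
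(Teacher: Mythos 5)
Your proposal matches the paper's argument exactly: the paper likewise records that a strongly MDS $(n,k,\delta)$ code with $(n-k)\mid\delta$ is both MDS and MDP, and then derives the corollary as an immediate consequence of Proposition~\ref{propostion:mds+mdp}. No gaps; treating the ``strongly MDS and $(n-k)\mid\delta$ implies MDP'' implication as a known fact to be cited is precisely what the paper does.
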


\bibliographystyle{plain}
\bibliography{bib}
\end{document}